\tikzset{p0/.style = {shape = circle, draw, thick, minimum size = 0.7cm}}
\tikzset{p1/.style = {rectangle, minimum size=.7cm, draw, thick}}
\tikzset{p0s/.style = {shape = circle, draw, thick, minimum size = 0.4cm}}
\tikzset{p1s/.style = {rectangle, minimum size=.4cm, draw, thick}}
\tikzset{>=stealth, shorten >=1pt}
\tikzset{every edge/.style = {thick, ->, draw}}
\tikzset{every loop/.style = {thick, ->, draw}}
\tikzstyle{fault}=[dashed]
\newcommand{\myquot}[1]{``#1''}
\newcommand{\bigo}[0]{\mathcal{O}}
\newcommand{\size}[1]{|#1|}
\newcommand{\set}[1]{\{ #1 \}}
\renewcommand{\epsilon}{\varepsilon}
\newcommand{\arena}{\mathcal{A}}
\newcommand{\game}{\mathcal{G}}
\newcommand{\col}{\Omega}
\newcommand{\wincond}{\mathrm{Win}}
\newcommand{\winreg}{\mathcal{W}}
\newcommand{\safety}{\mathrm{Safety}}
\newcommand{\reach}{\mathrm{Reach}}
\newcommand{\buechi}{\text{Büchi}}
\newcommand{\parity}{\mathrm{Parity}}
\newcommand{\exptime}{\textsc{ExpTime}}
\newcommand{\expspace}{\textsc{ExpSpace}}
\newcommand{\twoexp}{\textsc{2ExpTime}}
\newcommand{\pspace}{\textsc{PSpace}}
\newcommand{\disturbances}[0]{\#_D}
\newcommand{\rig}{\mathrm{rig}}
\newcommand{\Gammabot}{\Gamma_{\!\!\bot}}
\newcommand{\initmark}{I}
\newcommand{\pds}{\mathcal{P}}
\newcommand{\Epsilon}{\mathcal{E}}
\newcommand{\sh}{\mathrm{sh}}
\newcommand{\prim}[1]{p_{#1}\#}
\newcommand{\val}{\mathrm{val}}
\newcommand{\rep}{\mathrm{rep}}
\newcommand{\maxheight}{\mathrm{maxSh}}
\newcommand{\opt}{\mathrm{opt}}
\newcommand{\att}[0]{\mathrm{Att}_1}
\newcommand{\bndr}[0]{\mathrm{Bnd}_D}
\newcommand{\sgmark}{\circ}
\newcommand{\sgrankr}{\mu_r}
\newcommand{\sgrankd}{\mu_d}
\newtheorem{theorem}{Theorem}
\newtheorem{remark}[theorem]{Remark}
\newtheorem{example}[theorem]{Example}
\newtheorem{lemma}[theorem]{Lemma}
\newtheorem{proposition}[theorem]{Proposition}
\newcommand*{\email}[1]{\href{mailto:#1}{\nolinkurl{#1}} }
\title{Optimally Resilient Strategies in Pushdown Safety Games}
\author{\textbf{Daniel Neider}\\
Max Planck Institute for Software Systems, 67663 Kaiserslautern, Germany\\ \email{neider@mpi-sws.org}\bigskip \\ 
\textbf{Patrick Totzke}\\ University of Liverpool, Liverpool L69 3BX, United Kingdom\\ \email{totzke@liverpool.ac.uk}\bigskip\\ 
\textbf{Martin Zimmermann}\\ University of Liverpool, Liverpool L69 3BX, United Kingdom\\ \email{martin.zimmermann@liverpool.ac.uk}
}
\date{\vspace{-4ex}}
\begin{document}

\maketitle

\begin{abstract}
Infinite-duration games with disturbances extend the classical framework of infinite-duration games, which captures the reactive synthesis problem, with a discrete measure of resilience against non-antagonistic external influence.
This concerns events where the observed system behavior differs from the intended one prescribed by the controller.
For games played on finite arenas it is known that computing optimally resilient strategies only incurs a polynomial overhead over solving classical games.

This paper studies safety games with disturbances played on infinite arenas induced by pushdown systems.
We show how to compute optimally resilient strategies in triply-expo\-nential time.
For the subclass of safety games played on one-counter configuration graphs, we show that determining the degree of resilience of the initial configuration is PSPACE-complete and that optimally resilient strategies can be computed in doubly-exponential time.
\end{abstract}

\section{Introduction}
Infinite games on finite arenas are a popular approach to the synthesis of reactive controllers from logical specifications.
Originally proposed by Büchi and Landweber in 1969~\cite{buechi}, many variations of this classical framework have been studied, including stochastic games~\cite{Condon/93/onalgorithms}, games with partial information~\cite{DoyenRaskin11}, games with delays~\cite{HoschLandweber72}, and games over infinite arenas such as pushdown graphs~\cite{Walukiewicz01} and automatic structures~\cite{DBLP:conf/wia/Neider10,DBLP:conf/tacas/NeiderT16}.
Other variations of this framework stem from the desire to synthesize controllers that exhibit certain user-desired properties.
Examples of such properties range from controllers that need to achieve their task, e.g., reaching a goal, as quickly as possible~\cite{Cachat02} to controllers that are ``robust'' or ``resilient'' with respect to the environment in which they are deployed~\cite{DBLP:journals/acta/BloemCGHHJKK14,DBLP:journals/tcs/MullerS85,DBLP:conf/cav/BloemCHJ09,DBLP:journals/tse/HuangPSW16,DBLP:journals/tecs/MajumdarRT13,DBLP:journals/tac/TabuadaCRM14,DBLP:conf/csl/TabuadaN16,DBLP:conf/hybrid/TopcuOLM12}.
Furthermore, infinite games have a plethora of applications in logic, automata theory and verification beyond the synthesis of reactive controllers.
In this paper, we are concerned with the synthesis application and study infinite games with so-called \emph{unmodeled intermittent disturbances}~\cite{DBLP:conf/cdc/DallalNT16} played on configuration graphs of pushdown machines (pushdown graphs). 

Pushdown graphs are finitely represented infinite graphs, typically the simplest  class of such graphs one studies. 
Despite being conceptually simple, they have natural applications in program analysis, static code analysis, and compiler optimization~\cite{RepsHS95,RepsLK07} due to their ability to capture recursion, e.g., the call stack of a procedural program.
Furthermore, pushdown graphs are known to be well-behaved, and many problems on pushdown graphs are decidable (see, e.g.,~\cite{DBLP:journals/jcss/BohmGJ14,DBLP:journals/tcs/Senizergues01,DBLP:journals/siamcomp/Senizergues05,srba}). 
In particular, Walukiewicz showed that solving parity games played on pushdown graphs is $\exptime$-complete~\cite{Walukiewicz01}, pav\-ing the way for effective synthesis of \emph{recursive} controllers. 
Also, Walukiewicz's result started a long and fruitful line of work on games on pushdown graphs~\cite{Cachat02,DBLP:conf/icalp/Cachat03,CarayolH18,KV2000,Serre06}.
Of particular interest is the special case of games on configuration graphs of one-counter machines, i.e., pushdown machines with a single stack symbol, which is known to be $\pspace$-complete~\cite{Serre06,JancarS07}.

Games with unmodeled intermittent disturbances were originally introduced by Dallal, Neider, and Tabuada~\cite{DBLP:conf/cdc/DallalNT16} to synthesize resilient controllers. 
The observation underlying this type of infinite game is that modeling the real-world environment of a controller in sufficiently great detail is often extremely challenging, either because parts of the environment are unknown or because simulating the environment is costly. 
Moreover, even if a high-resolution model of the environment is available, the resulting games often become prohibitively large.
To alleviate this serious obstacle, Dallal, Neider, and Tabuada proposed to augment classical games with what they call unmodeled intermittent disturbances (in the following just called \emph{disturbances} for the sake of brevity).
Intuitively, such disturbances modify the outcome of a control action, thus modeling that the intended action of the controller did not have the desired consequences.
Note, however, that disturbances are not under the control of the environment and, thus, are not antagonistic.
Similarly, one does not consider the occurrence of disturbances as random  events, as coming up with an appropriate stochastic error model is typically hard. 
Instead, the reader should understand them as rare events, such as a robot arm failing to grab an object due a physical phenomenon that has not been fully modeled.

The original work of Dallal, Neider, and Tabuada~\cite{DBLP:conf/cdc/DallalNT16} provides a method to compute \emph{optimally resilient strategies} for safety games over finite arenas,
which intuitively are winning strategies that can tolerate as many disturbances as possible.
In follow-up work, Neider, Weinert, and Zimmermann~\cite{NeiderW018} have shown that computing optimally resilient strategies in finite arenas only incurs a polynomial overhead over solving classical games (under some mild assumptions on the winning condition), i.e., whenever a class of games is solvable without disturbances, then it is also solvable with disturbances.
In particular, they have developed an algorithm that is effective for all standard winning conditions such as Rabin, Muller, and parity.
Note, however, that both approaches crucially rely on the arena being finite.

 The natural question, which we address here, is how to compute optimally resilient strategies for games on infinite arenas.
As this is a very ambitious goal in its full generality, we restrict ourselves here to the setting
of \emph{safety games} played on \emph{pushdown graphs}.\footnote{
Some of our results do carry over to other winning conditions, such as reachability and parity,  or do not require the underlying arena to be a pushdown graph.
If this is the case, we present our arguments and state our results as general as possible.
Also, we discuss the additional challenges one has to overcome to generalize all our results to reachability and parity conditions.}

As argued before, pushdown games are a natural starting point for investigating effective algorithms for games on infinite graphs,
and safety specifications are a fundamental class of specifications in practice~\cite{DBLP:conf/icse/DwyerAC99}. 
While this setting might seem restrictive, recall that both the $\exptime$-hard\-ness of solving pushdown games~\cite{Walukiewicz01} and the $\pspace$-hard\-ness of solving one-counter games~\cite{Serre06} already hold for the safety condition. 
Thus, the complexity of solving pushdown games stems from the transition from finite to infinite graphs, not from the expressiveness of the winning condition.
The setting we consider here is still  expressive enough to model interesting applications such as reasoning about exception handling in recursive programs.
Here, one is interested in determining how many exceptions the program can tolerate while still satisfying a given specification. 

To capture the optimization aspect of the problem at hand, we re-use Neider, Weinert, and Zimmermann's notion of \emph{resilience values}~\cite{NeiderW018}, which assigns to every vertex~$v$ of the arena an ordinal $r_\game(v) \leq \omega + 1$, where $\game$ denotes the game in question and $\omega$ is the first infinite ordinal.
Intuitively, $r_\game(v)$ denotes how many disturbances can be tolerated by an optimally resilient strategy from $v$. This value can be $k \in \omega$ ($k-1$ disturbances can be tolerated, but not $k$), $\omega$ (finitely many disturbances can be tolerated, but not infinitely many), or $\omega+1$ (infinitely many disturbances can be tolerated).
When moving from finite to infinite arenas, however, various conceptual and technical complications arise, which make computing the resilience values of vertices and, by extension, resilient strategies challenging.

For instance, safety games over infinite arenas no longer guarantee the existence of optimally resilient strategies, i.e., in an infinite arena, one does not necessarily have a strategy that can tolerate an arbitrary finite number of disturbances from a vertex with resilience $\omega$.
Instead one has, for every $k \in \omega$, a strategy that can tolerate $k$ disturbances, but not $k+1$.

Another complication is the fact that it is no longer possible to globally bound the finite resilience values in infinite arenas. 
In contrast, in the case of finite arenas, the number of vertices is a trivial bound on the finite resilience values~\cite{NeiderW018}.
Hence, fixed-point algorithms like the ones devised for finite arenas~\cite{DBLP:conf/cdc/DallalNT16,NeiderW018} and algorithms based on exhaustive search do not necessarily terminate.

\subsection*{Our Contributions}
In the rest of this paper, we study resilience in pushdown safety games, which we introduce in Section~\ref{sec_prelims}.

First, we show in Section~\ref{sec_arbitrary} that 
no vertex of a finitely branching safety game (which covers pushdown games in particular) can have resilience~$\omega$.
As a corollary,
we show that Player~$0$ has positional optimally resilient strategies in finitely branching safety games.
In contrast, we show that Player~$0$ does not necessarily have an optimally resilient strategy in infinitely branching safety games, for the reasons explained earlier.

In Sections~\ref{sec_riggedgames} to \ref{sec_ocs}, we consider the problem of determining the resilience of the initial vertex of a given pushdown safety game. 
First, we show in Section~\ref{sec_riggedgames} how to characterize resilience values using classical games (without disturbances):
While the notion of resilience is not defined via strategies of the antagonist, we show that one can nevertheless give control over disturbances to the antagonist, if one additionally adjusts the winning condition to control the number of occurrences of disturbances. 
For certain resilience values, but not all, this adjustment leads to a polynomial time reduction to solving classical games
on pushdown games.
The values that can be characterized in safety games are fixed finite values~$k$ and $\omega+1$, but not $\omega$. 

We then prove that the resilience value of the initial vertex in pushdown safety games can determined in triply-exponential time (Sections~\ref{sec_pushdown}) and that of the initial vertex in one-counter safety games in polynomial space (Section~\ref{sec_ocs}).
The latter result is tight, as associated decision problems are shown to be $\pspace$-complete.
To show membership, we use the following approach: 
We prove the existence of an upper bound on the resilience value of the initial vertex in case it is finite. 
With such an upper bound~$b$, we can use the characterizations developed in Section~\ref{sec_riggedgames} to perform an exhaustive search on the finite search space (the resilience is either in $\set{0,1,\ldots, b}$ or $\omega+1$, as we have ruled out $\omega$).
For general pushdown games, this search can be implemented in triply-exponential time, as the bound~$b$ is doubly-exponential.
However, relying on the simplicity of configuration graphs of one-counter systems and on the fact that the bound~$b$ is only exponential in this case, we are able to show that the search can be implemented in polynomial space for one-counter safety games.
Proving the last result requires the combination of a wide range of techniques, including results from the theory of quantitative pushdown games~\cite{FridmanZ12}, positional determinacy for quantitative pushdown games, and specifically tailored ``hill-cutting'' \cite{DBLP:journals/jcss/BohmGJ14,Val1973} and ``summarization'' arguments \cite{RepsHS95,HMM2016}, which we generalize from individual paths in pushdown systems to  strategies.
Also, we show that a strategy that is optimally-resilient from the initial vertex can be computed in exponential space (triply-exponential time) for one-counter safety games (pushdown safety games).

Section~\ref{sec_conclusion} concludes and discusses directions for future work.
Finally, we present an application of our results, namely, a connection between optimally resilient strategies in pushdown safety games and 
optimal strategies (in the number of steps to the target) in pushdown reachability games~\cite{Cachat02,CarayolH18}.
There, we also discuss which of our results obtained here carry over to pushdown reachability games and discuss the obstacles preventing us from generalizing the other results from safety to reachability.

\subsection*{Related Work}
\label{subsec_relatedwork}
Resilience, and closely related notions like fault-tolerance and robustness, are not a novel concept in the context of reactive systems synthesis, with numerous formalizations having been proposed.
So as to not clutter this paper too much, we refer the reader to Dallal, Neider, and Tabuada~\cite{DBLP:conf/cdc/DallalNT16} as well as Neider, Weinert, and Zimmermann~\cite{NeiderW018} for a comprehensive discussion of how these notions are related to the concept of unmodeled intermittent disturbances.
Other notions of resilience against environmental impacts not discussed there include  an approach
based on imperfect information games that quantifies the resilience of controllers to noise in the input signal~\cite{Almagor2017,10.1007/978-3-642-19805-2_19} (see also the references).

Finally, let us mention that one can implement the characterization of finite resilience values presented in Section~\ref{sec_riggedgames} by energy conditions~\cite{DBLP:conf/formats/BouyerFLMS08,DBLP:conf/emsoft/ChakrabartiAHS03}.
However, solving energy games on pushdown graphs is undecidable~\cite{AbdullaAHMKT14} and so we do not pursue this approach here.
Similarly unfeasible are stochastic methods to quantify resilience in pushdown games. Indeed,
checking even the most basic, almost-sure reachability conditions for stochastic games on pushdown graphs
is undecidable already for single state systems or single-player games \cite{PMDPsundec}.

\section{Preliminaries}
\label{sec_prelims}
We use the ordinals~$0 < 1< 2 < \cdots < \omega < \omega+1 < \omega+2$ to define resilience values. 
For convenience of notation, we also denote the cardinality of $\omega$ by $\omega$.  

\subsection{Infinite Games with Disturbances}
\label{subsec_games}
An \emph{arena} (with unmodeled intermittent disturbances)~$\arena = (V, V_0, V_1, E, D)$ consists of a countable directed graph~$(V, E)$, a partition~$\set{V_0, V_1}$ of $V$ into the set of vertices~$V_0$ of Player~$0$  and the set of vertices~$V_1$ of Player~$1$, and a set~$D \subseteq V_0 \times V$ of disturbance edges. 
Note that only vertices of Player~$0$ may have outgoing disturbance edges.
We require that every vertex~$v \in V$ has a successor $v'$ with $(v,v') \in E$ to avoid finite plays. A vertex~$v \in V$ is a \emph{sink} if it has a single outgoing edge~$(v,v) \in E$ leading back to itself but no outgoing disturbance edges. 

A \emph{play} in $\arena$ is an infinite sequence~$\rho = (v_0, b_0) (v_1, b_1) (v_2, b_2) \cdots \in (V\times\set{0,1})^\omega$
 such that $b_0 = 0$ and for all $j>0$: 
$b_j = 0$ implies $(v_{j-1}, v_j) \in E$, and $b_j = 1$ implies $(v_{j-1}, v_j) \in D$. 
Hence, the additional bits~$b_j$ for~$j > 0$ denote whether a standard edge or a disturbance edge has been taken to move from $v_{j-1}$ to $v_j$. 
We say $\rho$ starts in $v_0$. 
A play prefix~$(v_0, b_0) \cdots (v_j, b_j)$ is defined similarly and ends in $v_j$. 
The number of disturbances in a play~$\rho = (v_0, b_0) (v_1, b_1) (v_2, b_2) \cdots$ is defined as $\disturbances(\rho) = \size{\set{j \in \omega \mid b_j = 1}}$, which is either some $k \in \omega$ (if there are finitely many disturbances, namely $k$) or it is equal to $\omega$ (if there are infinitely many). 
A play~$\rho$ is disturbance-free, if $\disturbances(\rho) = 0$.

A \emph{game} (with unmodeled intermittent disturbances)~$\game = (\arena, \wincond)$ consists of an arena with set~$V$ of vertices and a winning condition~$\wincond \subseteq V^\omega$. 
 A play~$\rho = (v_0, b_0) (v_1, b_1) (v_2, b_2) \cdots$
  is winning for Player~$0$ if  $v_0 v_1 v_2 \cdots \in \wincond$, otherwise it is winning for Player~$1$. 
 Hence, winning is oblivious to occurrences of disturbances. 

In this work, we focus on safety conditions, but also use the Büchi and parity condition in proofs.
The former two are induced by a subset~$F$ of the set~$V$ of vertices while the latter is induced by a coloring~$\col\colon V \rightarrow \omega$, which is required to have a finite range~$\col(V)$.

\begin{itemize}
	
\item $\safety(F)$ containing the sequences~$v_0 v_1 v_2 \cdots \in V^\omega$ with $ v_j \notin F$  for every $j \in \omega$ denotes the safety condition induced by $F$, which requires to avoid $F$.

\item $\buechi(F)$ containing the sequences~$v_0 v_1 v_2 \cdots \in V^\omega$ with $v_j \in F$  for infinitely  many $j \in \omega$ denotes the Büchi condition induced by $F$, which requires to visit $F$ infinitely often.

\item  $\parity(\col)$  containing the sequences~$ v_0 v_1 v_2 \cdots \in V^\omega$ with even $\limsup \col(v_0) \col(v_1) \col(v_2) \cdots$
denotes the (max-) parity condition induced by $\col$, which requires the maximal color occurring infinitely often during a play to be even. 
As the range of $\col$ is finite, every play has a maximal color occurring infinitely often.

\end{itemize}
A game~$(\arena, \wincond)$ is a \emph{safety game} if $\wincond = \safety(F)$ for some subset~$F$ of the vertices of $\arena$.

A \emph{strategy} for Player~$i \in \set{0,1}$ is a function~$\sigma \colon V^*V_i \rightarrow V$ such that $(v_j, \sigma(v_0 \cdots v_j)) \in E$ for every $v_0 \cdots v_j \in V^*V_i$. 
A play~$(v_0,b_0) (v_1, b_1) (v_2,b_2) \cdots $ is \emph{consistent with} $\sigma$ if $v_{j+1} = \sigma(v_0 \cdots v_j)$ for every $j$ with $v_j \in V_i$ and $b_{j+1} = 0$, i.e., if the next vertex is the one prescribed by the strategy unless a disturbance edge is used. 
A strategy~$\sigma$ is positional, if $\sigma(v_0 \cdots v_j) = \sigma(v_j)$ for all $v_0 \cdots v_j \in V^*V_i$. 

\subsection{Pushdown Games}
\label{subsec_pushdowndefs}
A \emph{pushdown system} (PDS)~$\pds = (Q, \Gamma, \Epsilon,  q_{\initmark})$ consists of a finite set~$Q$ of states  with an initial state~$q_{\initmark}\in Q$, a stack alphabet~$\Gamma$ with a designated stack bottom symbol~$\bot\notin\Gamma$, and a transition relation~$\Epsilon\subseteq Q \times \Gammabot \times Q \times \Gammabot^{\leq 2}$, where $\Gammabot=\Gamma\cup\{\bot\}$ and $\Gammabot^{\le 2} = \set{w \in \Gammabot^* \mid \size{w} \le 2}$. 
We require $\Epsilon$ to neither write nor delete $\bot$ from the stack.
Also, we assume every PDS to be deadlock-free, i.e., for every $q\in Q$ and $A\in\Gammabot$ there exist $q'\in Q$ and $w\in\Gammabot^{\leq 2}$ such that $(q,A,q',w)\in\Epsilon$.
Finally, $\pds$ is a \emph{one-counter system} (OCS) if $\size{\Gamma} = 1$. 

A stack content is a word in $\Gamma^*\bot$ where the leftmost symbol is assumed to be the top of the stack. 
A \emph{configuration} of $\pds$ is a pair $(q,\gamma)$ consisting of a state $q\in Q$ and a stack content~$\gamma\in\Gamma^*\bot$. 
The stack height of a configuration $(q,\gamma)$ is defined by $\sh(q,\gamma)=\size{\gamma}-1$. 
Given two configurations~$(q,\gamma)$ and $(q',\gamma')$ we write $(q,\gamma)\vdash_\Epsilon(q',\gamma')$ if there exists a transition~$(q,\gamma_0,q',w)\in\Epsilon$ with $\gamma'=w\gamma_1\cdots \gamma_{\size{\gamma}-1}$.

Fix a PDS~$\pds = (Q, \Gamma, \Epsilon,  q_{\initmark})$, a partition~$\set{Q_0, Q_1}$ of $Q$ and an additional transition relation $\Delta \subseteq Q_0 \times \Gammabot \times Q \times \Gammabot^{\leq 2}$, which is also required to neither write nor delete $\bot$ from the stack. 
These induce the (pushdown) arena~$(V, V_0, V_1, E, D)$ with
\begin{itemize}
	\item $V=\set{(q,\gamma)\mid q\in Q, \gamma\in\Gamma^*\bot}$ is the set of configurations of $\pds$,
	\item $V_i = \set{(q,\gamma)\in V\mid q\in Q_i}$ for $i \in \set{0,1}$ is the set of configurations whose state is in $Q_i$,
	\item $E = \set{(v,v') \mid v\vdash_\Epsilon v'}$ is the set of edges, induced by the transition relation~$\Epsilon$, and
	\item $D = \set{(v,v') \mid v\vdash_\Delta v'}$ is the set of disturbance edges, which is induced by the transition relation~$\Delta$, where $\vdash_{\Delta}$ is defined analogously to $\vdash_\Epsilon$.
\end{itemize}
Typically, we are interested in the initial vertex of the arena, which is defined as $(q_\initmark, \bot)$.

A \emph{pushdown safety game} is a safety game whose arena is induced by a pushdown system~$\pds$ and whose winning condition is induced by a subset of $\pds$'s states, i.e., $F \subseteq Q$ induces the set~$\set{(q, \gamma) \in V \mid q \in F} $ of vertices.
One-counter safety games are defined analogously.

When using a pushdown game as an input for an algorithm, we represent it by the underlying PDS, the partition of its states, the additional transition relation for the disturbance edges, and a subset of the states inducing the winning condition. 
We define the size of the input as $\size{Q} + \size{\Gamma}$, as all these objects can be represented in polynomial size in the number of states and stack symbols of the underlying PDS.

\subsection{Infinite Games without Disturbances}
\label{subsec_gameswithoutdisturbances}
For technical convenience, we characterize the classical notion of infinite games, i.e., those without disturbances, (see, e.g.,~\cite{GraedelThomasWilke02}) as a  special case of games with disturbances. 
Let $\game$ be a game with vertex set~$V$. 
A strategy~$\sigma$ for Player~$i$ in $\game$ is said to be a winning strategy for her from $v \in V$, if every disturbance-free play that starts in $v$ and that is consistent with $\sigma$ is winning for Player~$i$.
The winning region~$\winreg_i(\game)$ of Player~$i$ in $\game$ contains those vertices from which Player~$i$ has a winning strategy.
Thus, the winning regions of $\game$ are independent of the disturbance edges, i.e., we obtain the classical notion of infinite games.
Player~$i$ wins $\game$ from $v$, if $v \in \winreg_i(\game)$. 

\subsection{Resilient Strategies}
\label{subsec_gameswithdisturbances}
Let $\game$ be a game with vertex set~$V$ and let $\alpha \in \omega+2$. 
A strategy~$\sigma$ for Player~$0$ in $\game$ is \emph{$\alpha$-resilient} from~$v \in V$ if every play~$\rho$ that starts in $v$, that is consistent with $\sigma$, and with~$\disturbances(\rho) < \alpha$, is winning for Player~$0$. 
Thus, a $k$-resilient strategy with $k \in \omega$ is winning even under at most $k-1$ disturbances, an $\omega$-resilient strategy is winning even under any finite number of disturbances, and an $(\omega+1)$-resilient strategy is winning even under infinitely many disturbances. 

We define the \emph{resilience} of a vertex~$v$ of $\game$ as 
\[
r_\game(v) = \sup\set{ \alpha\in\omega+2  \mid \text{Player~$0$ has an} \text{$\alpha$-resilient strategy for $\game$ from $v$}}.
	\]
Note that the definition is not antagonistic, i.e., it is not defined via strategies of Player~$1$. 
A strategy~$\sigma$ is \emph{optimally resilient} if it is $r_\game(v)$-resilient from every vertex~$v$.

\begin{example}
\label{example_pdsexampleresilience}
Consider the game~$\game = (\arena, \safety(F))$ where $\arena$ is the arena from Figure~\ref{fig_pdsexample} and $\safety(F)$ is the safety condition induced by $F = \set{q_2}$.
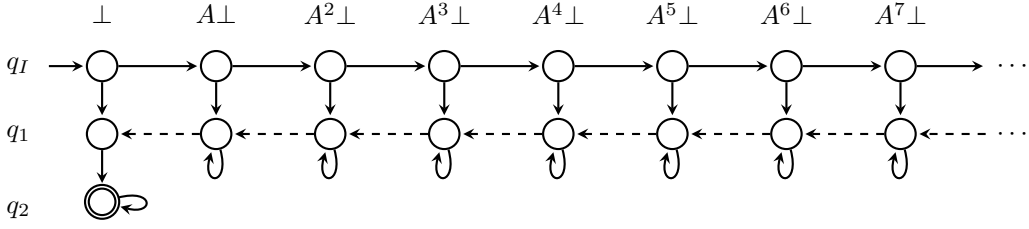
\begin{figure*}[t]
\centering
\begin{tikzpicture}[ultra thick]

\foreach \i in {0,1,...,7}{
\node[p0s] (i\i) at (\i*1.5,-.4) {};
\node[p0s] (1\i) at (\i*1.5,-1.3) {};
}
\node 	  (idots) at (12,-.4) {$\cdots$};
\node 	  (1dots) at (12,-1.3) {$\cdots$};
\node[p0s,double] (20) at (0,-2.2) {};

\foreach \i in {0,...,7}{
\path[-stealth]
(i\i) edge (1\i)
;
}
\foreach \i in {1,...,7}{
\path[-stealth]
(1\i) edge[loop below] ()
;
}

\foreach \i [remember=\i as \lasti (initially 0)]in {1,...,7}{
\path[-stealth]
(i\lasti) edge (i\i)
;
}

\foreach \i [remember=\i as \lasti (initially 7)]in {6,...,0}{
\path[-stealth]
(1\lasti) edge[fault] (1\i)
;
}

\node at (-1.1, -.4) {$q_\initmark$};
\node at (-1.1, -1.3) {$q_1$};
\node at (-1.1, -2.3) {$q_2$};

\node at (0, .3) {$\bot$};
\node at (1.5, .3) {$A\bot$};
\node at (3, .3) {$A^2\bot$};
\node at (4.5, .3) {$A^3\bot$};
\node at (6, .3) {$A^4\bot$};
\node at (7.5, .3) {$A^5\bot$};
\node at (9, .3) {$A^6\bot$};
\node at (10.5, .3) {$A^7\bot$};

\path[-stealth]
(-.7,-.4) edge (i0)
(i7) edge (idots)
(10) edge (20)
(1dots) edge[fault] (17)
(20) edge[loop right] ()
;

\end{tikzpicture}
\caption{A one-counter arena, restricted to vertices reachable from the initial vertex~$(q_\initmark, \bot)$. All vertices are in $V_0$, disturbance edges are drawn as dashed arrows, and doubly-lined vertices are in $F$.}
\label{fig_pdsexample}
\end{figure*}

We have that $r_\game(q_\initmark, A^n\bot) = \omega+1$, $r_\game(q_1,A^n\bot) = n$ for all $n \in \omega$, and $r_\game(q_2, \bot) = 0$.
Furthermore, the strategy that indefinitely stays in state~$q_\initmark$ is optimally resilient.
\end{example}

Next, we state some simple properties of resilient strategies and resilience values that are useful throughout the paper.

\begin{remark}
\label{remark_reconstructingdisturbances}
A strategy~$\sigma$ does not have access to the bits indicating whether a disturbance occurred or not. 
However, this is not a restriction: 
let $(v_0,b_0) (v_1, b_1) (v_2,b_2) \cdots $ be a play with $b_j = 1$ for some $j > 0$. 
We say that this disturbance is consequential (w.r.t.\ $\sigma$), if $v_j \neq \sigma(v_0 \cdots v_{j-1})$, i.e., if the disturbance transition~$(v_{j-1}, v_j)$ traversed by the play did not lead to the vertex the strategy prescribed. 
Such consequential disturbances can be detected by comparing the actual vertex~$v_j$ to $\sigma$'s output~$\sigma(v_0 \cdots v_{j-1})$. 
On the other hand, inconsequential disturbances will just be ignored. 
In particular, the number of consequential disturbances is always at most the number of disturbances.
\end{remark}

The following remark lists some simple consequences of the definition of resilience.

\begin{remark}
\label{remark_resilienceproperties}
The following hold for every vertex $v$.
\begin{enumerate}

	\item\label{remark_resilienceproperties_zeroresilience} Every strategy is $0$-resilient from $v$.

	\item\label{remark_resilienceproperties_winningstrategyplzero} A strategy is~$1$-resilient from~$v$ if and only if it is winning for Player~$0$ from~$v$. 

	\item\label{remark_resilienceproperties_monotonicity} 
            If a strategy is $\alpha$-resilient from $v$
            and 
            $\alpha > \alpha'$
            then it is also $\alpha'$-resilient from $v$.

\end{enumerate}
\end{remark}

Note that every game has disjoint winning regions. 
A game is determined, if every vertex is in either winning region.
The previous remark implies that resilience refines winning regions.

\begin{lemma}
\label{lemma_resiliencegeneralizeswinningregions}
Let $\game$ be a game and $v$ a vertex of $\game$.
\begin{enumerate}
\item\label{lemma_resiliencegeneralizeswinningregions_playerzero} $r_\game(v) > 0$ if and only if $v \in \winreg_0(\game)$.
\item\label{lemma_resiliencegeneralizeswinningregions_playerone} If $\game$ is determined, then $r_\game(v) = 0$ if and only if $v \in \winreg_1(\game)$.
\end{enumerate}
\end{lemma}

\begin{proof}
\ref{lemma_resiliencegeneralizeswinningregions_playerzero}.) 
The resilience of $v$ is greater than zero if and only if Player~$0$ has a $1$-resilient strategy from $v$ due to Item~\ref{remark_resilienceproperties_monotonicity} of Remark~\ref{remark_resilienceproperties}. 
The latter condition is equivalent to Player~$0$ having a winning strategy for $\game$ from $v$, i.e., equivalent to $v \in \winreg_0(\game)$, due to Item~\ref{remark_resilienceproperties_winningstrategyplzero} of Remark~\ref{remark_resilienceproperties}.

\ref{lemma_resiliencegeneralizeswinningregions_playerone}.) 
Due to Items~\ref{remark_resilienceproperties_monotonicity} and \ref{remark_resilienceproperties_winningstrategyplzero} of Remark~\ref{remark_resilienceproperties}, the resilience of $v$ is zero if and only if Player~$0$ has no winning strategy for $\game$ from $v$, i.e., $v \notin \winreg_0(\game)$. 
Due to determinacy, this is equivalent to $v \in \winreg_1(\game)$.	
\end{proof}

Note that determinacy is a necessary condition for Item~\ref{lemma_resiliencegeneralizeswinningregions_playerone}. In an undetermined game, the vertices that are in neither winning region have resilience zero, due to Item~\ref{lemma_resiliencegeneralizeswinningregions_playerzero}, but are in particular not in $\winreg_1(\game)$.

For determined disturbance-free games, i.e., those without disturbance edges in the arena, we obtain a tighter connection between resilience and winning regions:
There are only two possible resilience values and they characterize the winning regions.

\begin{remark}
\label{remark_resilienceequalswinningregionsdisturbancefreegames}
Let $\game$ be a determined disturbance-free game and $v$ a vertex of $\game$.
\begin{enumerate}

	\item\label{remark_resilienceequalswinningregionsdisturbancefreegames_playerzero}
	 $r_\game(v) = \omega+1$ if and only if $v \in \winreg_0(\game)$.

	\item\label{remark_resilienceequalswinningregionsdisturbancefreegames_playerone}
	 $r_\game(v) = 0$ if and only if $v \in \winreg_1(\game)$.

\end{enumerate}	
\end{remark}

\label{cons_safety2parity}
Finally, given a safety game~$\game = (\arena, \safety(F))$ with vertex set~$V$ we turn all vertices in $F$ into sinks, obtaining the arena~$\arena'$ with vertex set~$V$. 
Then, we have $r_\game(v) = r_{\game'}(v)$, where $\game' = (\arena', \parity(\col))$ for the coloring~$\col$ mapping vertices in $F$ to $1$ and all other vertices to $2$.
This construction will be useful in some proofs later on.

\section{Resilience in Infinite Safety Games}
\label{sec_arbitrary}
Player~$0$ has optimally resilient strategies in every safety game played in a finite arena~\cite{DBLP:conf/cdc/DallalNT16}. 
In this section, we show that this result also holds for pushdown safety games, but fails for safety games in arbitrary infinite arenas.
We start by observing that in safety games in infinite arenas, vertices with resilience~$\omega$ may exist, unlike in safety games in finite arenas~\cite{DBLP:conf/cdc/DallalNT16}.

\begin{example}
\label{example_safetyallpossibleresiliencevalues}	
Consider the one-counter arena presented in Figure~\ref{fig_pdsexample} with the safety condition induced by $F = \set{q_2}$, i.e., Player~$0$ wins if she avoids visiting a vertex with state~$q_2$.
 As argued in Example~\ref{example_pdsexampleresilience}, the resulting game~$\game$ has vertices of resilience~$\omega+1$ and $k$, for each $k \in \omega$, i.e., all values but $\omega$ are assumed. 
 
Let us add a vertex~$v \in V_0$ to $\game$ with outgoing edges to all vertices of the form~$(q_1, A^n\bot)$ to obtain the game~$\game'$ (which is infinitely branching and therefore no longer a pushdown arena). 
 Let $\sigma_k$, for $k > 0$, be a strategy that moves from $v$ to $(q_1, A^k\bot)$.
We have that $r_{\game'}(v) \ge \omega$, as $\sigma_k$ is $k$-resilient from $v$.
Consider an arbitrary strategy~$\sigma$:
From $v$, it moves to some $(q_1, A^k\bot)$ from which $k$ disturbances force the play into the losing sink. 
Hence, $\sigma$ is not $(k+1)$-resilient and therefore not $\omega$-resilient.
Thus, there is no optimally resilient strategy in $\game'$.
\end{example}

The underlying issue is that $r_\game(v) \ge \omega$ can be witnessed either 
\begin{enumerate}
	\item[(a)] by the existence of a strategy that is $\omega$-resilient from $v$, or
	\item[(b)] by the existence of a family~$(\sigma_k)_{k \in \omega}$ of strategies where each $\sigma_k$ is  $ k $-resilient from $v$, but not $\omega$-resilient from $v$.
\end{enumerate}  
The second case only exists as $\omega$ is a limit ordinal (the only one we consider).
For all $\alpha \neq \omega$, we have that $r_\game(v) = \alpha$ if and only if Player~$0$ has an $\alpha$-resilient strategy from $v$.
The games studied in previous work~\cite{DBLP:conf/cdc/DallalNT16,NeiderW018} only exhibited the former case, as these only considered finite arenas. 
As witnessed in Example~\ref{example_safetyallpossibleresiliencevalues}, this is no longer true in games in infinite arenas.

Note that there is a change of quantifiers between these two cases: by definition, an $\omega$-resilient strategy is $ k $-resilient for every $ k  \in \omega$, i.e., in the former case there is a uniform strategy that is $ k $-resilient for every $ k  \in \omega$. 
In the latter case, for every $ k  \in \omega$, there is a strategy that is $ k $-resilient, but not $\omega$-resilient. 
Hence, in the following, we distinguish between these two cases. 
We say that a vertex~$v$ of a game~$\game$ with $r_\game(v) = \omega$ has a uniform witness\footnote{Note that uniformity here refers to having a single strategy~$\sigma$ that is $k$-resilient from $v$ for every $k$. It is \emph{not} related to the concept of uniform winning strategies, i.e., strategies that are winning from every vertex in a winning region.}, if there is an $\omega$-resilient strategy from $v$.
A game with a vertex of resilience~$\omega$ without a uniform witness has no optimally resilient strategy by definition.

For safety games in infinite arenas, the existence of optimally resilient strategies depends on the branching of the arena.
We say that an arena~$(V, V_0, V_1, E, D)$ is \emph{finitely branching} if the set $\set{v' \mid (v,v') \in E}$ of successors of $v$ is finite for every $v \in V$.
Otherwise, if there is a vertex with infinitely many successors, then the arena is \emph{infinitely branching}.
Note that pushdown arenas are finitely branching.

The following theorem shows that the games presented in Example~\ref{example_safetyallpossibleresiliencevalues} already exhibit all possible resilience values in safety games, and that infinite branching is necessary to obtain a vertex of resilience~$\omega$. 
We formulate the result for arbitrary infinite arenas, as the proof technique we use here does not rely on the arena being a pushdown arena.

\begin{lemma}
\label{lemma_resiliencevaluessafety}
Let $\game$ be a safety game with vertex set~$V$.
\begin{enumerate}
	\item\label{lemma_resiliencevaluessafety_infbranch}
	There is no $v \in V$ with $r_\game(v)=\omega$ that has a uniform witness.
	
	\item\label{lemma_resiliencevaluessafety_finbranch}
	 If $\arena$ is finitely branching, then there is no $v \in V$ with $r_\game(v) = \omega$.
\end{enumerate}\
\end{lemma}

\begin{proof}
\ref{lemma_resiliencevaluessafety_infbranch}.)
Let $\game = (\arena, \safety(F))$. 
Towards a contradiction assume that there is a vertex~$v \in V$ with $r_\game(v) = \omega$ and that there is a strategy~$\sigma$ that is $\omega$-resilient from $v$.
Due to $r_\game(v) < \omega+1$, $\sigma$ is not $(\omega+1)$-resilient from $v$. 
Thus, there is a play~$\rho = (v_0, b_0)(v_1, b_1)(v_2, b_2) \cdots$ that starts in $v$, is consistent with $\sigma$, satisfies $\disturbances(\rho) < \omega+1$ (which is a tautology), and such that $v_0 v_1 v_2 \cdots \notin \safety(F)$, i.e., there is a $j$ such that $v_j \in F$.
Consider a play of the form $\rho' = (v_0, b_0) \cdots (v_j, b_j) \rho''$ that is consistent with $\sigma$ and such that $(v_j, b_j)\rho''$ is disturbance-free.
Such a play exists, as each vertex in $V_0$ has a non-disturbance successor.
The play~$\rho'$ starts in $v$, is consistent with $\sigma$, satisfies $\disturbances(\rho') \le j$, as disturbances can only occur in the prefix~$(v_0, b_0) \cdots (v_j, b_j)$, but violates the safety condition, as $v_j \in F$ is visited by $\rho'$. 
Therefore, $\sigma$ is not $(j+1)$-resilient from $v$, and in particular not $\omega$-resilient from $v$, which contradicts our assumption.

\ref{lemma_resiliencevaluessafety_finbranch}.)
We begin by giving a characterization of the resilience values in finitely branching safety games that will be the basis of both the proof of Lemma~\ref{lemma_resiliencevaluessafety}.\ref{lemma_resiliencevaluessafety_finbranch} and the proof of Theorem~\ref{theorem_optimalstrategiesinfinitelybranchinggames}.
The characterization is a generalization of a similar one for safety games in finite arenas~\cite{DBLP:conf/cdc/DallalNT16}.

Fix a finitely branching safety game~$\game = (\arena, \safety(F))$ with $\arena = ( V, V_0, V_1, E, D )$. 
First, we recall the  attractor construction for Player~$1$.  
Fix a set~$X \subseteq V$.
Let $A_0 =  X$ and define, for every $j \ge 0$, $A_{j+1}$ as follows.
\[
A_{j+1} = A_j \cup\set{ v \in V_0 \mid \text{for all $(v,v') \in E$: $v' \in A_j$}} \cup \set{ v \in V_1 \mid \text{there exists $(v,v') \in E$ with $v' \in A_j$}}  
\]
We call $\att(X) = \bigcup_{j \in \omega} A_j$ the $1$-attractor of $X$ in $\arena$.

By construction, Player~$1$ has a positional strategy~$\tau$ such that every disturbance-free play starting in $\att(X)$ and being consistent with $\tau$ visits $X$ at least once. 
Dually, Player~$0$ has a positional strategy~$\sigma$ such that every disturbance-free play starting in $V \setminus \att(X)$ and being consistent with $\sigma$ never visits $X$. 
We refer to $\tau$ and $\sigma$ as the attractor and trap strategy associated to $\att(X)$.
Finally, we call 
\[
\bndr(X) = \set{v \in V_0\setminus X \mid \text{there exists $(v,v') \in D$ with $v' \in X$}}
\]
the $D$-boundary of $X$, which contains all vertices $v \notin X$ from which a disturbance edge leads into $X$. 

In the following, we alternatingly apply the attractor and the boundary operation starting with the set~$F$ of vertices that Player~$0$ has to avoid in order to win. 
Then, we show that every vertex in the limit has finite resilience while every other vertex has resilience~$\omega + 1$, which completes the proof.

Formally, let $S_0  = \att(F)$ be the $1$-attractor of $F$, $S_{j+1} = \att( S_j\cup \bndr(S_j))$ for every $j \in \omega$, and define $S = \bigcup_{j \in \omega}S_j$. 
Now, for $v \in S$, let $r(v) =  \min \set{j \mid v \in S_j}$ be the index at which $v$ is added to $S$.\label{page_r}

We claim $r_\game(v) \le r(v)$ for every $v \in S$ and $r_\game(v) = \omega +1 $ for every $v \notin S$, which proves our claim.

Fix a  vertex~$v \in S$.
To show $r_\game(v) \le r(v)$,  we need to show for every strategy $\sigma$ for Player~$0$ that there is a play that starts in $v$, is consistent with $\sigma$, has at most $r(v)$ disturbances, and is losing for Player~$0$, i.e., it visits $F$ at least once. 
We fix any strategy $\sigma$ and construct such a play inductively starting with the play prefix~$(v_0, b_0) = (v,0)$.
During the construction, we ensure that the prefix constructed thus far is consistent with $\sigma$ and that it ends in $S$.
Thus, assume we have constructed a play prefix~$w = (v_0, b_0) \cdots (v_j,b_j)$ satisfying the invariant.
To extend it, we distinguish two cases:
\begin{enumerate}

\item Assume  $r(v_j) = 0$, i.e., $v_j \in S_0 = \att(F)$.
Then, consider the unique disturbance-free play~$(v_j,0) \rho$ consistent with $\sigma$ and the attractor strategy for Player~$1$ associated with $\att(F)$.
We extend $w$ by $\rho$ to complete the construction of the desired play. 
The resulting play~$w\rho$ is consistent with $\sigma$ due to our invariant and the choice of $\rho$, and  contains a vertex from $F$. 

\item Assume $r(v_j) > 0$, i.e., $v_j \in S_{r(v_j)} = \att( S_{{r(v_j)}-1}\cup \bndr(S_{{r(v_j)}-1}))$. 
Consider the unique disturbance-free play~$(v_j,0) \rho$ consistent with $\sigma$ and the attractor strategy for Player~$1$ associated with 
\[\att( S_{{r(v_j)}-1}\cup \bndr(S_{{r(v_j)}-1})).\] 
Let $(v_j,0) (v_{j+1},0) \cdots (v_{j+j'},0) $ be the minimal prefix of $(v_j,0) \rho$ such that $v_{j+j'} \in S_{{r(v_j)}-1}\cup \bndr(S_{{r(v_j)}-1})$. 
If $v_{j+j'} \in  S_{{r(v_j)}-1}$ (which implies $j' >0$ due to $v_j \notin S_{r(v_j)-1}$) then we extend $w$ to $w(v_{j+1},0) \cdots (v_{j+j'},0)$ to obtain the next prefix in our inductive construction.
If $v_{j+j'} \in  \bndr(S_{{r(v_j)}-1})$, then there is a vertex~$v_{j+j'+1} \in S_{{r(v_j)}-1}$ and $(v_{j+j'},v_{j+j'+1}) \in D$ due to the definition of the $D$-boundary.
Thus, we extend $w$ to $w(v_{j+1},0) \cdots (v_{j+j'},0)(v_{j+j'+1},1)$ to obtain the next prefix in our inductive construction.
The resulting prefix is  consistent with $\sigma$ and its last vertex is in $S_{{r(v_j)}-1} \subseteq S$, i.e., our invariant is satisfied.
\end{enumerate}

Now, let $v_{j_0}, v_{j_1},v_{j_2},\ldots$ be the sequence of last vertices of the prefixes obtained during the construction.
In particular, $v_{j_0} = v$.
By construction, we have $r(v_{j_0}) > r(v_{j_1}) > r(v_{j_2}) \cdots$.
Hence, we apply the second case at most $r(v_{j_0})$ many times and then have to apply the first case.
Hence, we indeed obtain an infinite play~$\rho$ starting in $v$, which is consistent with $\sigma$ due to our invariant, and which visits $F$, as the first case is eventually applied.
Finally, $\rho$ has at most $r(v_{j_0}) = r(v)$ many disturbances, as each application of the second case adds at most one disturbance edge and the first case adds none.
Thus, $\rho$ witnesses that $\sigma$ is not $(r(v)+1)$-resilient from $v$. 
As we have picked $\sigma$ arbitrarily, we conclude $r_\game(v) \le r(v)$ as desired.

It remains to show  $r_\game(v) = \omega +1 $ for every $v \notin S$.  
We start by listing some properties of such vertices:
\begin{enumerate}

\item\label{propF} $v \notin F$, as $F \subseteq \att(F) = S_0 \subseteq S$.

\item\label{prop0} If $v \in V_0$, then there is a $v'$ with $(v,v') \in E$ and $v' \notin S$. 
Towards a contradiction, assume there is no such $v'$. 
Then, all successors of $v$ are in $S$.
As $v$ has only finitely many successors by assumption on $\arena$, there is a $j$ such that all these successors are in $S_j$.
Hence, $v \in \att(S_j) \subseteq  S_{j+1}\subseteq S$, which contradicts $v \notin S$.

\item\label{prop1} If $v \in V_1$, then all $v'$ with $(v,v') \in E$ satisfy $v' \notin S$. 
Towards a contradiction, assume there is a successor of $v'$ in $S$.
Then, $v'$ is in some $S_j$ and $v \in \att(S_j) \subseteq  S_{j+1}\subseteq S$, which contradicts $v \notin S$.

\item\label{propD} If $v \in V_0$ and $(v,v') \in D$, then $v'  \notin S$.
Again, towards a contradiction  assume there is a disturbance edge leading from $v$ to $v'$ in $S$.
Then,  $v' $ is in some $S_j$ and $v \in \bndr(S_j) \subseteq  S_{j+1}\subseteq S$, which contradicts $v \notin S$.

\end{enumerate}

Thus, due to Property~\ref{prop0}, Player~$0$ must have a positional strategy~$\sigma$ that moves from any vertex~$v \notin S$ to some successor~$v' \notin S$. 
Now, consider a play~$\rho$ that starts in a vertex~$v \notin S$, is consistent with $\sigma$, and has an arbitrary number of disturbances.
It starts outside of $S$, Player~$0$ does not move into $S$ by definition of $\sigma$, Player~$1$ cannot due to Property~\ref{prop1}, and disturbances do not lead into $S$ due to Property~\ref{propD}. Hence, $\rho$ never visits $S$ and thus also avoids $F$, due to Property~\ref{propF}. 
Hence, $\rho$ is winning for Player~$0$.
As $v$ and $\rho$ are arbitrary, we have shown $r_\game(v) = \omega+1$ for every $v \notin S$.
\end{proof}

Finally, the main result of this section shows that optimally resilient strategies exist in all finitely branching safety games, i.e., in particular in pushdown safety games.

\begin{theorem}
\label{theorem_optimalstrategiesinfinitelybranchinggames}
Player~$0$ has positional optimally resilient strategies in finitely branching safety games.
\end{theorem}

\begin{proof}
Let $\game = (\arena, \safety(F))$ with finitely branching $\arena = ( V, V_0, V_1, E, D )$, and let the values~$r(v)$ and the set~$S$ be defined as on Page~\pageref{page_r}.
We have shown $r_\game(v) \le r(v)$ for every $v \in S$ and $r_\game(v) = \omega +1 $ for every $v \notin S$ in the proof of Lemma~\ref{lemma_resiliencevaluessafety}.\ref{lemma_resiliencevaluessafety_finbranch}.
We now show $r_\game(v) \ge r(v)$ for every $v \in S$. 

To simplify our notation, let $X_0 = F$ and $X_{j+1} = S_j \cup \bndr(S_j)$, i.e., $S_j = \att(X_j)$ for every $j$. 
Now, for every $j \in \omega$, let $\sigma_j$ be the trap strategy for Player~$0$ associated with $S_j = \att(X_j)$, i.e., every disturbance-free play that starts in $V \setminus S_j$ and is consistent with $\sigma_j$ never visits $X_j$.
Recall that we defined~$r(v) = \min\set{j \mid v \in S_j}$ for all $v \in S$. 
Thus, if $r(v) > 0$, then $v \notin S_{j-1}$.

We define a positional strategy~$\sigma$ for Player~$0$ as follows: 
\begin{itemize}

\item If $v \in V_0 \cap S$ with $r(v) > 0$ then $\sigma(v) = \sigma_{r(v) - 1}(v)$.

\item If $v \in V_0 \cap S$ with $r(v) = 0$ then $\sigma(v) = v'$ for some arbitrary successor~$v'$ of $v$.

\item If $v \in V_0 \setminus S$ then $\sigma(v) = v'$ for some successor~$v'$ of $v$ with $v' \notin S$. We have argued 
    in the proof of Lemma~\ref{lemma_resiliencevaluessafety}.\ref{lemma_resiliencevaluessafety_finbranch}, that such a successor always exists if $v \notin S$.

\end{itemize}
Fix some $v \in S$ and consider a play~$\rho = (\rho_0,b_0) (\rho_1,b_1) (\rho_2,b_2) \cdots $ starting in $v \in S$, consistent with $\sigma$, and with $k < r(v)$ disturbances. 
A straightforward induction on $j\ge 0$ shows that $r(\rho_j) \ge r(v)- \disturbances((\rho_0,b_0) \cdots  (\rho_j,b_j) )$ for every $j$. 
Thus, $r(\rho_j) \ge r(v) - k > 0 $, which implies $\rho_j \notin F \subseteq S_0$, i.e., $\rho$ is winning for Player~$0$.

Therefore, $\sigma$ is $r(v)$-resilient from every $v \in S$. 
Conversely, in the proof of Lemma~\ref{lemma_resiliencevaluessafety}.\ref{lemma_resiliencevaluessafety_finbranch}, we have shown $r_\game(v) \le r(v)$. 
Hence, $r(v) = r_\game(v)$, i.e., $\sigma$ is $r_\game(v)$-resilient from every $v \in S$.
Furthermore, the arguments presented in the proof of Lemma~\ref{lemma_resiliencevaluessafety}.\ref{lemma_resiliencevaluessafety_finbranch} for vertices~$v \notin S$ show that $\sigma$ is $(\omega+1)$-resilient from every $v \notin S$.

Altogether, $\sigma$ is optimally resilient.
\end{proof}

\section{Characterizing Resilience Values via Classical Games}
\label{sec_riggedgames}
In this section, we characterize the existence of $\alpha$-resilient strategies by games without disturbances.
This generalizes a characterization for $\alpha = \omega+1$ in finite arenas~\cite{NeiderW018} to infinite arenas and all $\alpha \in \omega+2$.

The main idea is to give Player~$1$ control over the disturbances and to restrict the number of their occurrences using the winning condition. 
Intuitively, when it is Player~$0$'s turn at a vertex~$v$, we let Player~$1$ first decide whether to simulate a disturbance edge from $D$ or whether to allow Player~$0$ to pick a standard edge from $E$. 
To this end, we add $v$ to Player~$1$'s vertices and he can either move to some vertex~$v'$ such that the disturbance edge~$(v,v') $ exists.
By doing his, he has to visit the fresh vertex~$(v,v')$, which allows to keep track of the number of simulated disturbances. 
This vertex has exactly one outgoing edge leading to $v'$.
On the other hand, if he does not simulate a disturbance edge, he moves from $v$ to a fresh copy~$\overline{v}$ of $v$ from which Player~$0$ has edges leading to the successors of $v$.
Finally, the moves at Player~$1$'s original vertices are unchanged, but we subdivide the edge so that a play in the extended arena always alternates between vertices from $V$ and auxiliary vertices.

Formally,  given an arena~$\arena = (V, V_0, V_1, E, D)$, we define the rigged arena~$\arena_\rig = (V', V_0', V_1', E', D')$  with $V' = V \cup A$ for the set
\[
A = \set{\overline{v} \mid v \in V_0} \cup D \cup \set{(v,v') \in E \mid v \in V_1}
\]
of auxiliary vertices, $V_0' = \set{\overline{v} \mid v \in V_0}$, $V_1' = V' \setminus V_0'$, $D' = \emptyset$, and $E$ is the union of the following sets of edges:
\begin{itemize}

\item $\set{ (v,(v,v')), ((v,v'),v')  \mid (v,v') \in D}$: Player~$1$ simulates a disturbance edge~$(v,v') \in D$ by moving from $v$ to $v'$ via the auxiliary vertex~$(v,v')$ that signifies that a disturbance is simulated.

\item $\set{ (v,\overline{v}) \mid v\in V_0 }$: Player~$1$ does not simulate a disturbance edge and instead gives control to Player~$0$ by moving to the auxiliary vertex~$\overline{v}$.

\item $\set{  (\overline{v}, v') \mid v \in V_0 \text{ and }  (v,v') \in E }$: Player~$0$ has control at the auxiliary vertex~$\overline{v}$ and simulates a  standard move from $v \in V_0$ to $v'$.

\item $\set{ (v,(v,v')), ((v,v'),v')  \mid (v,v') \in E \text{ and } v\in V_1}$: Player~$1$ simulates a standard move from $v \in V_1$ to $v'$ by moving via the auxiliary vertex~$(v,v')$.

\end{itemize}
We illustrate the definition of the construction of the rigged arena in Figure~\ref{fig_riggedexample}.

\begin{figure*}
\centering
\begin{tikzpicture}

\node at (-1.3, .75) {$q_\initmark$};
\node at (-1.3, -.5) {$\overline{q_\initmark}$};
\node at (-1.3, -2.5) {$q_1$};
\node at (-1.3, -1.25) {$\overline{q_1}$};
\node at (-1.3, -4.5) {$q_2$};
\node at (-1.3, -5.5) {$\overline{q_2}$};

\foreach \i in {0,1,...,7}{
\node[p1s] (ione\i) at (\i*1.5,.75) {};
\node[p0s] (izero\i) at (\i*1.5,-.5) {};
\node[p1s] (1one\i) at (\i*1.5,-2.5) {};
\node[p0s] (1zero\i) at (\i*1.5,-1.25) {};
\node[p1s,fill=black!20] (d\i) at (\i*1.5, -3.75) {};
}

\node 	  (ionedots) at (12,.75) {$\cdots$};
\node 	  (izerodots) at (12,-.5) {$\cdots$};
\node 	  (1onedots) at (12,-2.5) {$\cdots$};
\node 	  (1zerodots) at (12,-1.25) {$\cdots$};
\node 	  (ddots) at (12,-3.75) {$\cdots$};

\node[p1s] (2one0) at (0,-4.5) {};
\node[p0s] (2zero0) at (0,-5.5) {};

\path[-stealth]
(1zero0.west) edge[bend right] (2one0.west)
(2one0) edge[bend left] (2zero0)
(2zero0) edge[bend left] (2one0)
(izero7) edge (ionedots)
(1onedots.west) edge[] (d7.east)
;

\foreach \i in {0,...,7}{
\path[-stealth]
(ione\i) edge (izero\i)
(izero\i.east) edge[bend left] (1one\i.east)
(d\i) edge[] (1one\i)
;
}
\foreach \i in {1,...,7}{
\path[-stealth]
(1one\i) edge[bend left] (1zero\i)
(1zero\i) edge[bend left] (1one\i)
;
}
\path[-stealth]
(1one0) edge (1zero0)
;

\foreach \i [remember=\i as \lastx (initially 0)]in {1,...,7}{
\path[-stealth]
(izero\lastx) edge (ione\i)
;
}

\foreach \i [remember=\i as \lastx (initially 7)]in {6,...,0}{
\path[-stealth]
(1one\lastx.west) edge[] (d\i.east)
;
}

\end{tikzpicture}
\caption{The rigged arena~$\arena_\rig$ for the arena~$\arena$ presented in Figure~\ref{fig_pdsexample}, restricted to vertices reachable from the initial vertex~$(q_\initmark,\bot)$. Round vertices are in $V_0$, square ones in $V_1$, and a gray vertex indicates that a disturbance has been simulated.}
\label{fig_riggedexample}
\end{figure*}
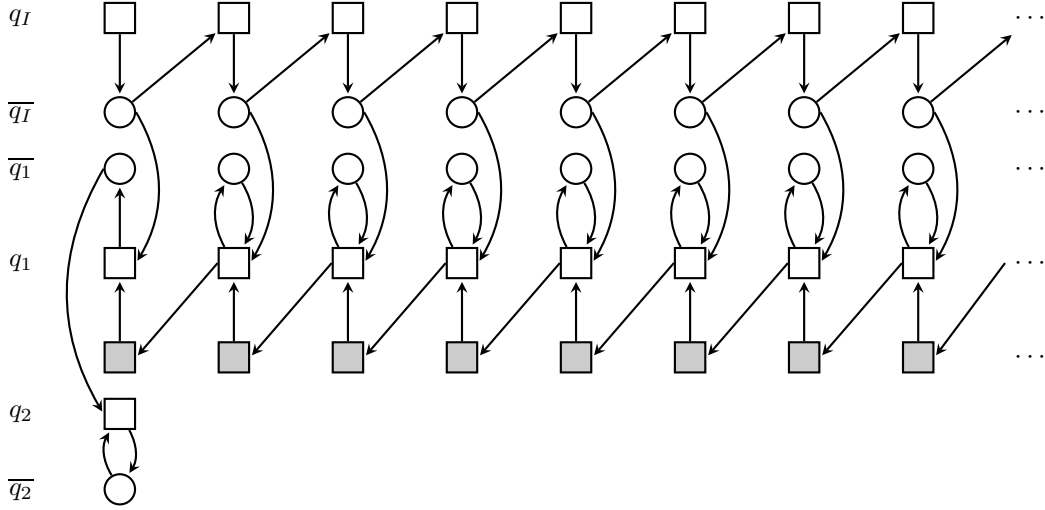

Let $R_{\ge k}$ denote the set of sequences~$v_0v_1v_2 \cdots \in (V')^\omega$ such that $\size{\set{j \mid v_j \in D}} \ge k$, i.e., those plays in which Player~$1$ simulates at least $k$ disturbances.
 Finally, given a winning condition~$\wincond \subseteq V^\omega$ for $\arena$, we define the rigged winning condition
 \[\wincond_\rig = \set{v_0 v_1 v_2 \cdots \in (V')^\omega \mid v_0 \in V \text{ and } v_0 v_2 v_4 \cdots \in \wincond},\]
  which contains all plays in $\arena_\rig$ that start in $V$ and are in $\wincond$ after removing the auxiliary vertices.
Note that $\buechi(D)$ contains those plays that simulate infinitely many disturbances. 

\begin{lemma}
\label{lemma_riggedgamescharacterizeresilience}
Let $\game = (\arena, \wincond)$ be a game, let $v$ be a vertex of $\game$, and let $k \in \omega$.
\begin{enumerate}

\item\label{lemma_riggedgamescharacterizeresilience_omegaplusone} 
Player~$0$ has an $(\omega+1)$-resilient strategy for $\game$ from $v$ if and only if $v \in \winreg_0(\arena_\rig,\wincond_\rig)$.

\item\label{lemma_riggedgamescharacterizeresilience_omega} 
Player~$0$ has an $\omega$-resilient strategy for $\game$ from $v$ if and only if  $v \in \winreg_0(\arena_\rig,\wincond_\rig \cup \buechi(D))$.

\item\label{lemma_riggedgamescharacterizeresilience_k} 
Player~$0$ has a $k$-resilient strategy for $\game$ from $v$ if and only if $v \in \winreg_0(\arena_\rig, \wincond_\rig \cup R_{\ge k})$.
\end{enumerate}
\end{lemma}

\begin{proof}
We begin by introducing translations between plays that are useful in all three cases.

First, we translate a play prefix~$w$ in $\arena$ into a play prefix~$t'(w)$ in $\arena_\rig$ satisfying the following invariant: 
\[t'( (v_0, b_0) \cdots (v_j, b_j) )\]
 starts in $v_0$ and ends in $v_j$.
We proceed by induction starting with $t'(v_0, b_0) = (v_0, 0)$.
For the induction step, we have to consider a play prefix~$(v_0, b_0) \cdots (v_j, b_j) (v_{j+1}, b_{j+1})$ such that $t'((v_0, b_0) \cdots (v_j, b_j))$ is already defined, which ends in $v_j$ due to our invariant.
We consider several cases:
\begin{itemize}

\item If $b_{j+1} = 1$, then $(v_j, v_{j+1})$ is a disturbance edge, which is simulated in $\arena_\rig$ by Player~$1$ taking control at $v_j$, moving to $(v_j, v_{j+1})$ and then to $v_{j+1}$. 
Hence, we define 
\[
t'((v_0, b_0) \cdots (v_j, b_j) (v_{j+1}, b_{j+1})) = t'((v_0, b_0) \cdots (v_j, b_j) ) \cdot ((v_j,v_{j+1}),0) (v_{j+1},0). 
\]

\item If $b_{j+1} = 0$ and $v_j \in V_0$, then $(v_j, v_{j+1})$ is a non-distur\-bance edge picked by Player~$0$, which is simulated in $\arena_\rig$ by Player~$1$ ceding control at $v_j$ to Player~$0$ by moving $\overline{v_j}$, from where Player~$0$ can then move to $v_{j+1}$.
Hence, we define
\[
t'((v_0, b_0) \cdots (v_j, b_j) (v_{j+1}, b_{j+1})) = t'((v_0, b_0) \cdots (v_j, b_j) ) \cdot (\overline{v_j},0) (v_{j+1},0). 
\]

\item If $b_{j+1} = 0$ and $v_j \in V_1$, then $(v_j, v_{j+1})$ is a non-distur\-bance edge picked by Player~$1$, which is simulated in $\arena_\rig$ by Player~$1$ directly moving to $v_{j+1}$.
Hence, we define
\[
t'((v_0, b_0) \cdots (v_j, b_j) (v_{j+1}, b_{j+1})) = t'((v_0, b_0) \cdots (v_j, b_j) ) \cdot ((v_j, v_{j+1}),0)  (v_{j+1},0). 
\]
\end{itemize}
In each case, the invariant is satisfied and 
\[t'((v_0, b_0) \cdots (v_j, b_j) (v_{j+1}, b_{j+1}))\] is indeed a play prefix due to $t'((v_0, b_0) \cdots (v_j, b_j))$ ending in $v_j$.

Furthermore, we extend $t'$ to infinite plays by defining $t'((v_0,b_0)(v_1,b_1)(v_2,b_2) \cdots)$ to be the unique play~$\rho'$ in $\arena_\rig$ such that $t'((v_0,b_0)\cdots(v_j,b_j))$ is a prefix of $\rho'$ for every $j \in \omega$.
Let $\rho = (v_0,0) (v_1,0) (v_2,0) \cdots$ be a play in $\arena$.
Then we have $t'(\rho) = (v_0, 0) (a_0,0) (v_1, 0) (a_1,0) (v_2, 0) (a_2,0)\cdots$ for  auxiliary vertices~$a_0a_1a_2\cdots$ and $\disturbances(\rho) = \size{\set{ j \mid a_j \in D }}$, i.e., the number of disturbances during a play~$\rho$ in $\arena$ is equal to the number of vertices from $D \subseteq A$ occurring in $t'(\rho)$.

Finally, we can use the translation~$t'$ to transform a strategy~$\sigma'$ for Player~$0$ in $\arena_\rig$ to a strategy~$\sigma$ for her in $\arena$. 
To this end, let $b^-$ denote the homomorphism from $(V' \times \set{0,1})^*$ to $(V')^*$ that removes the second component.
Then, we define
\[
\sigma(v_0 \cdots v_j) = \sigma'( b^-( t'( (v_0, b_0) \cdots (v_j, b_j) ) ) \cdot \overline{v_j})
\]
where $b_0 = 0$ and for every $0< j' \le j$, $b_{j'} = 1 $ if and only if $v_{j'-1} \in V_0$ and $v_{j'} \neq \sigma(v_0 \cdots v_{j'-1}) $, i.e., we reconstruct the consequential disturbances with respect to $\sigma$ as defined thus far. 
A simple induction shows that a play~$\rho$ in $\arena$ being consistent with $\sigma$ implies that $t'(\rho)$ in $\arena_\rig$ is consistent with $\sigma'$.

Now, we consider the other direction and translate a play prefix~$w$ in $\arena_\rig$ into a play prefix~$t(w)$ in $\arena$. 
Here, we only consider play prefixes~$w$ starting and ending in a vertex from $V' \setminus A$, i.e., only play prefixes
that do not start or end in one of the auxiliary vertices.
This satisfies the following invariant: $t( (v_0, 0) \cdots (v_j, 0) )$ starts in $v_0$ and ends in $v_j$ (recall that $\arena_\rig$ has no disturbance edges, which implies that all bits~$b_j$ in $w$ are equal to zero).
Again, we proceed by induction and start with $t(v_0,0) = (v_0,0)$.
For the induction step, consider a play prefix~$(v_0, 0) \cdots (v_j, 0)(a_j,0) (v_{j+1}, 0)$ such that $t((v_0, b_0) \cdots (v_j, b_j))$ is already defined, which ends in $v_j$ due to our invariant.

\begin{itemize}
	
	\item If the prefix is of the form 
	\[(v_0,0) \cdots (v_j,0)((v_j,v_{j+1}),0)(v_{j+1},0) \]
	 with $v_j \in V_0$, then the last move simulated during the play prefix is the disturbance edge~$(v_j,v_{j+1}) \in D$.
Hence, we define 
\[
t((v_0,0) \cdots (v_j,0)((v_j,v_{j+1}),0)(v_{j+1},0) ) =t((v_0,0) \cdots (v_j,0) ) \cdot (v_{j+1},1).
\]
	
	\item If the prefix is of the form
	\[(v_0,0) \cdots (v_j,0)(\overline{v_j},0)(v_{j+1},0),\] 
	then the last move simulated during the play prefix is the non-disturbance edge~$(v_j,v_{j+1}) \in E$ with $v_j \in V_0$.
Hence, we define 
\[
t((v_0,0) \cdots (v_j,0)(\overline{v_j},0)(v_{j+1},0) ) = t((v_0,0) \cdots (v_j,0) ) \cdot (v_{j+1},0).
\]	

	\item If the prefix is of the form 
	\[(v_0,0) \cdots (v_j,0)((v_j,v_{j+1}),0)(v_{j+1},0) \] with $v_j \in V_1$, then the last move simulated during the play prefix is the non-disturbance edge~$(v_j,v_{j+1}) \in E$.
Hence, we define 
\[
t((v_0,0) \cdots (v_j,0)(v_j,v_{j+1})(v_{j+1},0) ) = t((v_0,0) \cdots (v_j,0) ) \cdot (v_{j+1},0).
\]
	
\end{itemize}
In each case, the invariant is satisfied and the extension is indeed a play prefix due to $t((v_0,0) \cdots (v_j,0) ) $ ending in $v_j$.

Again, we extend the function~$t$ to infinite plays by defining $t((v_0,0)(v_1,0)(v_2,0) \cdots)$ to be the unique play~$\rho$ in $\arena$ such that $t((v_0,0)\cdots(v_j,0))$ is a prefix of $\rho$ for every $j \in \omega$.
Let \[\rho' = (v_0, 0) (a_0, 0) (v_1, 0) (a_1,0) (v_2, 0) (a_2,0) \cdots\] be a play in $\arena_\rig$ starting in $V$.
Hence, $t(\rho') = (v_0,b_0) (v_1, b_1) (v_2, b_2) \cdots$ for some bits~$b_j$, and $\size{\set{ j \mid a_j \in D }} = \disturbances(t(\rho')) $, i.e., the number of vertices from $D \subseteq A$ occurring in $\rho'$ is equal to the number of disturbances during the play~$t(\rho')$ in $\arena$.

To conclude, we again show that we can use the translation~$t$ to transform a strategy~$\sigma$ for Player~$0$ in $\arena$ to a strategy~$\sigma'$ for her in $\arena$. 
Here, let $b^-$ denote the homomorphism from $(V \times \set{0,1})^*$ to $V^*$ that removes the second component in each letter.
Now, we define
\[
\sigma'(v_0 \cdots v_j\overline{v_j}) = \sigma( b^-( t( (v_0, 0) \cdots (v_j, 0) ) )).
\]
Finally, a simple induction shows that a play~$\rho'$ in $\arena_\rig$ being consistent with $\sigma'$ implies that $t(\rho')$ in $\arena$ is consistent with $\sigma$.

After these preparations, the proof of the three characterizations is straightforward employing the transformation of strategies described above.

\ref{lemma_riggedgamescharacterizeresilience_omegaplusone}.)
Let $v \in \winreg_0(\arena_\rig,\wincond_\rig)$, i.e., Player~$0$ has a winning strategy~$\sigma'$ from $v$. 
Let the strategy~$\sigma$ for Player~$0$ in $\arena$ be obtained from $\sigma'$ as described above. 
We claim that $\sigma$ is $(\omega+1)$-resilient from $v$. 
To this end, let $\rho = (v_0, b_0) (v_1, b_1) (v_2, b_2)\cdots$ be a play in $\game$ that starts in $v$, is consistent with $\sigma$, and has an arbitrary number of disturbances. 
We need to show that $\rho$ is winning for Player~$0$, i.e., $v_0 v_1 v_2 \cdots \in \wincond$.

As argued above, the play~$t'(\rho)$ in $\arena_\rig$ is of the form
\[(v_0, 0) (a_0, 0)(v_1, 0) (a_1, 0)(v_2, 0) (a_2, 0) \cdots,\] starts in $v$, and is consistent with $\sigma'$.
This implies $t'(\rho) \in \wincond_\rig$. 
So, by definition of $\wincond_\rig$, we have indeed $v_0v_1v_2\cdots \in \wincond$.

Now, assume Player~$0$ has an $(\omega+1)$-resilient strategy~$\sigma$ for $\game$ from $v$.
Let the strategy~$\sigma'$ for Player~$0$ in $\arena_\rig$ be obtained from $\sigma$ as described above.
We claim that $\sigma'$ is a winning strategy from $v$ in the game~$(\arena_\rig,\wincond_\rig)$. 
To this end, let $\rho' = (v_0,0) (a_0,0) (v_1,0) (a_1,0) (v_2,0) (a_2,0)\cdots$ be a play in $\arena_\rig$ starting in $v$ and consistent with $\sigma'$. 
We need to show that $\rho'$ is winning for Player~$0$.

As argued above, the play~$t(\rho') = (v_0,b_0) (v_1,b_1) (v_2,b_2)\cdots $ in $\arena$ starts in $v$ and is consistent with $\sigma$.
Since $\sigma$ is $(\omega+1)$-resilient from $v$, $t(\rho')$ is winning for Player~$0$, as it has at most $\omega$ disturbances.
Thus, $v_0 v_1 v_2 \cdots \in \wincond$.
Hence, $\rho' \in \wincond_\rig$ by definition of $\wincond_\rig$, i.e., $\rho'$ is indeed winning for Player~$0$.

\ref{lemma_riggedgamescharacterizeresilience_omega}.)
As this proof is a refinement of the previous one, we only sketch the differences.

First, let $v \in \winreg_0(\arena_\rig,\wincond_\rig \cup \buechi(D))$, i.e., Player~$0$ has a winning strategy~$\sigma'$ from $v$ which induces a strategy~$\sigma$ for her in $\arena$.
We show that $\sigma$ is $\omega$-resilient from $v$.
To this end, let $\rho = (v_0, b_0) (v_1, b_1) (v_2, b_2)\cdots$ be a play in $\game$ that starts in $v$, is consistent with $\sigma$, and has a finite number of disturbances. 
We need to show that $\rho$ is winning for Player~$0$.

Again, the play~$t'(\rho)$ in $\arena_\rig$ starts in $v$ and is consistent with $\sigma'$.
Now, we additionally have that $t'(\rho)$ visits vertices in $D$ only finitely often, as the number of these visits is equal to the number of disturbances in $\rho$, as argued above.
Hence, $t'(\rho)$ is not in $\buechi(D)$, which implies $t'(\rho) \in \wincond_\rig$, as $t'(\rho)$ is consistent with the winning strategy~$\sigma$. 
This allows us, as before, to conclude that $\rho$ is indeed winning for Player~$0$.

Now, assume Player~$0$ has an $\omega$-resilient strategy~$\sigma$ for $\game$ from $v$ and let $\sigma'$ be the induced strategy for her in $\arena_\rig$.
We show that $\sigma'$ is winning from $v$ in the game~$(\arena_\rig,\wincond_\rig)$, i.e., every play $\rho' = (v_0,0)(a_0,0) (v_1,0)(a_1,0) (v_2,0)(a_2,0)\cdots$ in $\arena_\rig$ starting in $v$ and consistent with $\sigma'$ is winning for Player~$0$.

If $v_0 a_0 v_1 a_1 v_2 a_2 \cdots$ is in $\buechi(D)$, then $\rho'$ is winning for Player~$0$.
Thus, assume it is not. 
Then, consider the play~$t(\rho') = (v_0,b_0) (v_1,b_1) (v_2,b_2)\cdots $ in $\arena$.
It starts in $v$, is consistent with $\sigma$, and has the same finite number of disturbances as $\rho'$ has visits to vertices in $D$.
Hence, as $\sigma$ is $\omega$-resilient from $v$, $t(\rho')$ is winning for Player~$0$.
From this we can conclude, as before, that $\rho'$ is indeed winning for Player~$0$.

\ref{lemma_riggedgamescharacterizeresilience_k}.)
 Analogously to the previous one arguing about \myquot{less than $k$ disturbances} instead of \myquot{finitely many disturbances}.
\end{proof}

\section{Resilience in Pushdown Safety Games}
\label{sec_pushdown}
The goal of this section is to develop an algorithm that determines the resilience of the initial vertex of a pushdown safety game. 
To this end, we rely on the characterizations presented in the previous section as as well as an upper bound on the possible finite resilience values that can be realized by the initial vertex of such a game.
We begin by showing that the first two characterizations presented in Lemma~\ref{lemma_riggedgamescharacterizeresilience} (for $\omega+1$ and $\omega$) are effective for pushdown games.
Intuitively, we prove that a pushdown machine~$\pds$ inducing an arena~$\arena$ can in polynomial time be turned into a pushdown machine~$\pds_\rig$ inducing the arena~$\arena_\rig$. 

We state the result for parity conditions, which subsume safety conditions.

\begin{lemma}
\label{lemma_riggedpushdowngameseffective}
The following problem is $\exptime$-complete (and $\pspace$-complete if inputs are restricted to one-counter games):
\myquot{Given a pushdown parity game~$\game$ with initial vertex~$v_\initmark$ and $\alpha \in \set{\omega, \omega+1}$, does Player~$0$ have an $\alpha$-resilient strategy for $\game$ from $v_\initmark$?}.
If yes, such a strategy can be computed in exponential time.
\end{lemma}

\begin{proof}
Given a pushdown arena~$\arena$ induced by a PDS~$\pds$ with set~$Q$ of states, a partition~$\set{Q_0, Q_1}$ of $Q$, and a transition relation~$\Delta$ inducing the disturbance edges, a PDS~$\pds'$ with set~$Q'$ of states and a partition~$\set{Q_0', Q_1'}$ of $Q'$ inducing~$\arena_\rig$ can be computed in linear time. 
If $\pds$ is one-counter, then so is $\pds'$.
Further, given a coloring~$\col$ of $Q$, one can determine 
\begin{itemize}

\item a coloring~$\col'$ of $Q'$ such that $\parity(\col') = \parity(\col)_\rig $, and

\item a coloring~$\col''$ of $Q'$ such that $\parity(\col'') = \parity(\col)_\rig\cup \buechi(D)$, where $D$ is the set of disturbances edges of $\arena$.

\end{itemize}
In $\col'$, all vertices in $V$ inherit their colors from $\col$ and auxiliary vertices are colored by zero, which makes them irrelevant, while in $\col''$, all vertices in $V$ inherit their colors from $\col$, all vertices in $D$ are assigned an even color that is larger than all colors in $\col$'s range, and all other auxiliary vertices are colored by zero.

Hence, the games characterizing the existence of $(\omega+1)$-resilient and $\omega$-resilient strategies are pushdown (one-counter) parity games that can be efficiently constructed.
Finally, checking whether Player~$0$ wins a pushdown parity game from the initial vertex is $\exptime$-complete~\cite{Walukiewicz01} while checking whether Player~$0$ wins a one-counter parity game from the initial vertex is $\pspace$-complete~\cite{JancarS07,Serre06}.
Furthermore, the first algorithm directly yields winning strategies for the rigged games, which can easily be turned into $(\omega+1)$-resilient or $\omega$-resilient strategies for the original game.

The lower bounds hold already
for determining the winner
of a disturbance-free pushdown (one-counter) safety game,
which is hard for $\exptime$~\cite{Walukiewicz01} ($\pspace$~\cite{JancarS07}~\footnote{The result cited pertains to emptiness of alternating word automata over a singleton alphabet. However it is easy to see that this problem can be reduced to solving one-counter safety games.}).
\end{proof}

Both $\exptime$-hardness and $\pspace$-hardness already hold for pushdown safety games and one-counter safety games, respectively.
The third characterization of Lemma~\ref{lemma_riggedgamescharacterizeresilience} (for $k \in \omega$) is effective as well (even for parity games).
Here the running time depends on $k$.

\begin{lemma}
\label{lemma_riggedpushdowngameseffective_k}
The following problem is in $\twoexp$ (in $\expspace$ if the input is one-counter): 
\myquot{Given a pushdown parity game~$\game$ with initial vertex~$v_\initmark$ and $k \in \omega$ (encoded in binary), does Player~$0$ have a $k$-resilient strategy for $\game$ from $v_\initmark$?}.
If yes, such a strategy can be computed in doubly-exponential time.
\end{lemma}

\begin{proof}
Assume the input~$\game = (\arena, \parity(\col))$ is induced by a PDS~$\pds$ with set~$Q$ of states, a partition~$\set{Q_0, Q_1}$ of $Q$, and a coloring~$\col$ of $Q$.
Then, we construct a PDS~$\pds'$ with set~$Q'$ of states and a partition $\set{Q_0', Q_1'}$ of $Q'$ inducing~$\arena_\rig$ as for the proof of Lemma~\ref{lemma_riggedgamescharacterizeresilience}.
Now, we turn $\pds'$ into a PDS~$\pds'_k$ with set~$Q' \times \set{0,\ldots, k}$ of states which uses the additional component to keep track of the number of simulated disturbances,  up to $k$.
Further, we use the partition 
\[\set{Q_0' \times \set{0,\ldots, k}, Q_1' \times \set{0,\ldots, k} }\] and define the coloring~$\col'$ such that $\col'(q,k') = \col(q)$ for $k' < k$ and $\col'(q,k) = 1$.

The resulting pushdown game is equivalent to $(\arena_\rig, \wincond_\rig \cup R_{\ge k})$ and the winner from the initial vertex~$((q_\initmark, 0), \bot)$ can be determined in exponential time in $k$ and the size of $\pds$~\cite{Walukiewicz01}, i.e., in doubly-exponential time in the size of the input, as $k$ is encoded in binary.
Due to Lemma~\ref{lemma_riggedgamescharacterizeresilience}.\ref{lemma_riggedgamescharacterizeresilience_k}, Player~$0$ wins from the initial vertex if and only if she has a $k$-resilient strategy from $v_\initmark$ in $\game$, i.e., if and only if $r_\game(v_\initmark) \ge k$. 
Furthermore, the algorithm computes winning strategies for Player~$0$ in doubly-exponential time, if they exist at all.
These can easily be turned into $k$-resilient strategies for the original game.

If the input is one-counter, then the resulting pushdown game is one-counter as well and the winner from the initial vertex can be determined in polynomial space in $k$ and the size of $\pds$~\cite{Serre06}, i.e., in exponential space in the input.
\end{proof}

There are no vertices of resilience $\omega$ in pushdown safety games (Lemma~\ref{lemma_resiliencevaluessafety}.\ref{lemma_resiliencevaluessafety_finbranch}). 
Thus, the effective characterizations we have presented so far suffice to determine the resilience of the initial vertex in such a game:
First, check whether it is $\omega+1$; if not, then it has to be finite. 
Hence, for increasing $k$, check whether the resilience is greater than $k$.
As the resilience is finite, this algorithm will eventually terminate and report the resilience correctly. 
However, without an upper bound on the possible finite resilience values of the initial vertex, there is no bound on the running time, just a termination guarantee.
In the remainder of this section, we present a tight doubly-exponential upper bound~$b(\pds)$ on the resilience of the \emph{initial vertex} in pushdown safety games in the case the resilience is finite.
That is, if $r_\game(v_\initmark) \in \omega$ then $r_\game(v_\initmark) < b(\pds)$.
Note that any proof of the upper bound has to depend on the vertex under consideration being initial, as we have shown that there is in general no upper bound on finite resilience values assumed in pushdown safety games (cf.~Example~\ref{example_safetyallpossibleresiliencevalues}).
The bound~$b(\pds)$ only depends on the pushdown system~$\pds$  inducing the game and yields an effective algorithm to determine the resilience of the initial vertex~$v_\initmark$, presented as Algorithm~\ref{algorithm_ocssafety}.

\begin{algorithm}
\begin{algorithmic}[1]
 \IF {$v_\initmark \in \winreg_0(\arena_\rig, \safety(F)_\rig) $}
 \RETURN {$\omega+1$}
 \ENDIF
 \FOR {$k = 1 $ \TO $b(\pds)$}
 	\IF { $v_\initmark\in\winreg_1(\arena_\rig, \safety(F)_\rig \cup R_{\ge k})$ }
 	\RETURN {$k-1$}
 	\ENDIF
 \ENDFOR
\end{algorithmic} 
\caption{Computing the resilience of the initial vertex~$v_\initmark$ of a pushdown safety game~$\game = (\arena, \safety(F))$ induced by a PDS~$\pds$}.
\label{algorithm_ocssafety}
\end{algorithm}

Given a PDS~$\pds$ with set~$Q$ of states and set~$\Gamma$ of stack symbols let $\pds_\rig$ be the PDS obtained from $\pds$ by implementing the transformation from an arena to the rigged arena.
The cardinality of the set~$Q'$ of states of $\pds_\rig$ is bounded quadratically in $\size{Q}$ and the set of stack symbols used by $\pds_\rig$ is still $\Gamma$.
We define $b(\pds) = \size{Q'} \cdot h(\pds)\cdot \size{\Gamma}^{h(\pds)}$, where $h(\pds) = \size{Q'}\cdot\size{\Gamma}\cdot 2^{\size{Q'}+1} +1$\label{bounddef}. 
Note that $b(\pds) \in 2^{2^{\bigo(\size{\pds}^2)}}$ and $b(\pds) \in {2^{\bigo(\size{\pds}^2)}}$ if $\pds$ is an OCS.

\begin{lemma}
\label{lemma_ubresilienceinitialvertexsafety}
Let $\game$ be a pushdown safety game with initial vertex~$v_\initmark$. If $r_\game(v_\initmark) \neq \omega+1$, then $r_\game(v_\initmark) < b(\pds)$, where $\pds$ is the PDS underlying $\game$.
\end{lemma}

To prove this result, we apply a result about winning strategies for Player~$1$ in pushdown safety games (Player~$1$ has a reachability condition in a safety game: he wins if $F$ is visited at least once).
Fix a disturbance-free pushdown safety game~$\game = (\arena, \safety(F))$ with initial vertex~$v_\initmark$.
We say that a winning strategy~$\tau$ for Player~$1$ from $v_\initmark$ \emph{bounds the stack height to} $n \in \omega$ if every play~$v_0 v_1 v_2 \cdots$ that starts in $v_\initmark$ and is consistent with $\tau$ satisfies the following condition for all $j \in \omega$: either there is some $j' \le j$ with $v_{j'} \in F$ or $\sh(v_j) \le n$.
Thus, such a strategy ensures a visit to $F$ when starting in the initial vertex, and ensures that the stack height~$n$ is never exceeded before $F$ is visited for the first time.
The next proposition shows that such a strategy always exists for $n = h(\pds)$, if Player~$1$ wins from $v_\initmark$ at all.

\begin{lemma}
\label{lemma_ubstackheightreachability}
If $v_\initmark \in \winreg_1(\game)$, then Player~$1$ has a winning strategy~$\tau$ that bounds the stack height to $h(\pds)$, where $\pds$ is the PDS underlying $\game$.
\end{lemma}

\begin{proof}
We transform $\game$ into a parity game as described at the end of Section~\ref{subsec_games} on Page~\pageref{cons_safety2parity}.
This transformation can be implemented on the PDS inducing $\game$ without increasing the number of states or the number of stack symbols.
Furthermore, the parity condition only uses two colors, say $0$ for states outside of $F$ and $1$ for states in $F$, which are sinks.
Now, the desired result follows from a result on the existence of strategies in pushdown games that bound the occurrence of undesirable colors (here, the color~$0$, which is undesirable for Player~$1$)~\cite{FridmanZ12}. 
Slightly more formally, in the resulting parity game, the \emph{stair score} for the color~$0$ after a play prefix (see \cite{FridmanZ12} for definitions) is equal to the stack height of the prefix. 
Now, the main result in the work cited above shows that Player~$1$ has a strategy that bounds the stair score  for $0$ by $h(\pds)$, if he wins at all. 
Thus, this strategy bounds the stack height to $h(\pds)$.
\end{proof}

Now, we are able to prove the upper bound~$b(\pds)$ on the resilience of the initial vertex of a pushdown safety game induced by $\pds$ in case this value is finite. 

\begin{proof}[Proof of Lemma~\ref{lemma_ubresilienceinitialvertexsafety}]
Let  $ r_\game(v_\initmark) \neq \omega+1 $.
As pushdown arenas are finitely branching, Lemma~\ref{lemma_resiliencevaluessafety} yields $ r_\game(v_\initmark) \in \omega$, say $r_\game(v_\initmark)  = k$.
By definition, Player~$0$ has a $k$-resilient strategy for $\game$ from $v_\initmark$, but no $(k+1)$-resilient strategy. 
Hence, due to Lemma~\ref{lemma_riggedgamescharacterizeresilience}.\ref{lemma_riggedgamescharacterizeresilience_k}, Player~$1$ wins the  game 
\[(\arena_\rig, \safety(F)_\rig \cup R_{\ge k+1})\] from $v_\initmark$.
Thus, he also wins the safety game
\[(\arena_\rig, \safety(F)_\rig )\] 
from $v_\initmark$, as every winning strategy for Player~$1$ for  the former game is also one for the latter.
Hence, applying Lemma~\ref{lemma_ubstackheightreachability} yields the existence of a winning strategy~$\tau$ for the latter game from $v_\initmark$ that bounds the stack height by $h(\pds)$.
Note that we can assume $\tau$ to be positional (see Lemma~\ref{lemma_positionalstategies} on Page~\pageref{lemma_positionalstategies} for a stronger statement and note that the construction presented in its proof preserves bounds on the stack height).

Now, every play that starts in $v_\initmark$ and is consistent with $\tau$ visits each vertex with stack height at most $h(\pds)$ at most once before reaching $F$.
There are at most $b(\pds)$ such vertices, i.e., after at most $b(\pds)-1$ moves, $F$ is reached.

Now, we show that Player~$0$ has no $b(\pds)$-resilient strategy from $v_\initmark$ in $\game$. 
To this end, we show for that every strategy~$\sigma$ for her, there is a play~$\rho$ that starts in $v_\initmark$, is consistent with $\sigma$, has at most $b(\pds)-1$ many disturbances, and visits a vertex in $F$, i.e., it is losing for Player~$0$.

Let $\sigma'$ be the strategy for Player~$0$ in $\arena_\rig$ obtained by transforming $\sigma$ as described in the proof of Lemma~\ref{lemma_riggedgamescharacterizeresilience}.
Now, let $\rho'$ be the unique play of $\arena_\rig$ starting in $v_\initmark$ that is consistent with $\sigma'$ and $\tau$, which visits $F$ after at most $b(\pds)-1$ many moves. 
Hence, there are at most $b(\pds)-1$ many simulated disturbances in $\rho'$ before the first visit to $F$.
Now, $t(\rho')$ starts in $v$, is consistent with $\sigma$, and there are at most $b(\pds)-1$ many disturbances in $t(\rho')$ before the first visit to $F$ (which occurs).
Now, we just replace the suffix of $t(\rho')$ after the first visit to $F$ by some disturbance-free suffix so that the resulting play~$\rho$ is still consistent with $\sigma$.
We obtain a play~$\rho$ starting in $v_\initmark$, consistent with $\sigma$, with at most $b(\pds)-1$ many disturbances that is losing for Player~$0$.
Hence, $\sigma$ is indeed not $b(\pds)$-resilient.
As we have picked $\sigma$ arbitrarily, there is no $b(\pds)$-resilient strategy from $v_\initmark$ and therefore $r_\game(v_\initmark) < b(\pds)$.
\end{proof}

This upper bound immediately implies correctness of  Algorithm~\ref{algorithm_ocssafety}, which determines the resilience of the initial vertex of a pushdown safety game.

\begin{theorem}
\label{theorem_determiningresiliencesafetypds}
The following problem can be solved in triply-exponential time: 
\myquot{Given a pushdown safety game~$\game$ with initial vertex~$v_\initmark$, determine $r_\game(v_\initmark)$}.
If yes, an $r_\game(v_\initmark)$-resilient strategy can be computed in triply-exponential time.
\end{theorem}

\begin{proof}
Algorithm~\ref{algorithm_ocssafety} is correct due to 
Lemma~\ref{lemma_ubresilienceinitialvertexsafety}.
The triply-exponential running time stems from the doubly-exponential bound $b(\pds)$ presented in Lemma~\ref{lemma_ubresilienceinitialvertexsafety}, which has to be plugged into Lemma~\ref{lemma_riggedpushdowngameseffective_k} to implement the check in Line~$4$.
The check in Line~$1$ runs in exponential time (Lemma~\ref{lemma_riggedpushdowngameseffective}) and the for-loop terminates after at most doubly-exponentially many iterations.
\end{proof}

Note that there is a gap between the triply-exponential upper bound and the exponential lower bound obtained for the related decision problems for $\omega$ and $\omega+1$ (Lemma~\ref{lemma_riggedpushdowngameseffective}).

The complexity for the special case of one-counter safety games is much smaller, i.e., the resilience of the initial vertex can be computed in exponential space, as the winner of one-counter safety games can be computed in polynomial space~\cite{Serre06} and the upper bound on finite resilience values of the initial vertex is only exponential.
Furthermore, a witnessing strategy can be computed in doubly-exponential time using Lemma~\ref{lemma_riggedpushdowngameseffective_k}.
In the next section, we prove that one can do even better by exploiting the simple structure of one-counter arenas.

To conclude this section, we claim that the bound~$b(\pds)$ on the resilience of an initial vertex in a pushdown safety game with finite resilience is tight:
There is an exponential lower bound for the one-counter case and a doubly-exponential lower bound for the pushdown case. 
Both constructions are generalizations of constructions that appeared in the literature previously~\cite{CarayolH18}.
To simplify our notation, let $p_j$ denote the $j$-th prime number and define the primorial~$\prim{k} = \Pi_{j=1}^k p_j $ to be the product of the first $k$ prime numbers. 
We have $\prim{k} \ge 2^k$.

\begin{lemma}
\label{lemma_lbresilienceinitialvertexsafety}
Let $k \in \omega$.
\begin{enumerate}
	
	\item\label{lemma_lbresilienceinitialvertexsafety_ocs}
There is a one-counter safety game $\game_k$ with initial state~$v_\initmark$ such that $r_\game(v_\initmark) = \prim{k}$ and the underlying OCS has polynomially many states in $k$.
	
	\item\label{lemma_lbresilienceinitialvertexsafety_pds}
There is a pushdown safety game $\game_k'$ with initial state~$v_\initmark$ such that $r_\game(v_\initmark) = 2^{\prim{k}} -1$ and the underlying PDS has polynomially many states in $k$ and two stack symbols.
	
\end{enumerate}	
\end{lemma}

\begin{proof}
\ref{lemma_lbresilienceinitialvertexsafety_ocs}.)
We show the game~$\game_2$ in Figure~\ref{fig_safetyresillowerbounds_ocs} and later explain the general case. 
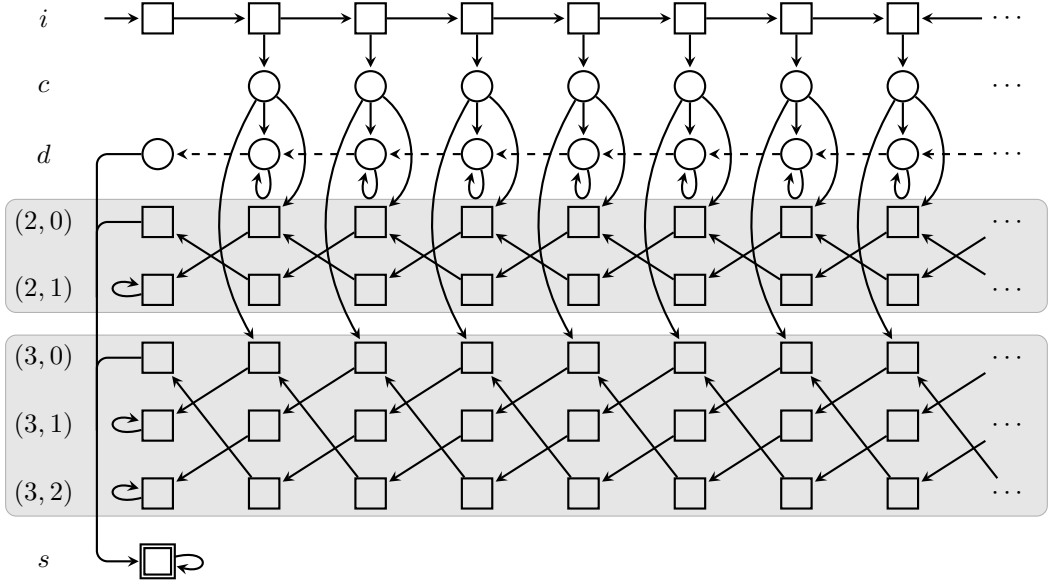
\begin{figure*}
\centering

\def\x{1.4}
\def\y{-.9}

\begin{tikzpicture}[]

\draw[black!30, fill = black!10, rounded corners] (-2,-2.4) rectangle (11.7,-3.9);

\draw[black!30, fill = black!10, rounded corners] (-2,-4.2) rectangle (11.7,-6.6);

\foreach \i in {1,2,...,7}{
	\node[p0s] (c\i) at (\x*\i,1*\y) {};
}

\foreach \i in {0,1,...,7}{
	\node[p1s] (i\i) at (\x*\i,0*\y) {};
	\node[p0s] (d\i) at (\x*\i,2*\y) {};

	\node[p1s] (20\i) at (\x*\i,3*\y) {};
	\node[p1s] (21\i) at (\x*\i,4*\y) {};

	\node[p1s] (30\i) at (\x*\i,5*\y) {};
	\node[p1s] (31\i) at (\x*\i,6*\y) {};
	\node[p1s] (32\i) at (\x*\i,7*\y) {};
}

\node 	 (idots) at (11.2,0*\y) {$\cdots$};
\node 	 (cdots) at (11.2,1*\y) {$\cdots$};
\node	 (ddots) at (11.2,2*\y) {$\cdots$};

\node	 (20dots) at (11.2,3*\y) {$\cdots$};
\node	 (21dots) at (11.2,4*\y) {$\cdots$};

\node	 (30dots) at (11.2,5*\y) {$\cdots$};
\node	 (31dots) at (11.2,6*\y) {$\cdots$};
\node	 (32dots) at (11.2,7*\y) {$\cdots$};

\node[p1s, double]	 (s) at (0,8*\y) {};

	\node (il) at (-1.5,0*\y) {$i$};
	\node (cl) at (-1.5,1*\y) {$c$};
	\node (dl) at (-1.5,2*\y) {$d$};

	\node (20l) at (-1.5,3*\y) {$(2,0)$};
	\node (21l) at (-1.5,4*\y) {$(2,1)$};

	\node (30l) at (-1.5,5*\y) {$(3,0)$};
	\node (31l) at (-1.5,6*\y) {$(3,1)$};
	\node (32l) at (-1.5,7*\y) {$(3,2)$};
	
	\node	 (sl) at (-1.5,8*\y) {$s$};

\foreach \i in {1,...,7}{
\path[-stealth]
(i\i) edge (c\i)
(d\i) edge[loop below] ()
(c\i) edge (d\i)
(c\i) edge[bend left=50] (20\i)
(c\i) edge[bend right] (30\i)

;
}

\foreach \i [remember=\i as \lastx (initially 0)]in {1,...,7}{
\path[-stealth]
(i\lastx) edge (i\i)
;
}

\foreach \i [remember=\i as \lastx (initially 7)]in {6,...,0}{
\path[-stealth]
(d\lastx) edge[fault] (d\i)
(20\lastx) edge (21\i)
(21\lastx) edge (20\i)

(30\lastx) edge (31\i)
(31\lastx) edge (32\i)
(32\lastx) edge (30\i)
;
}

\path[-stealth]
(-.7,0) edge (i0)
(s) edge[loop right] ()
(210) edge[loop left] ()
(310) edge[loop left] ()
(320) edge[loop left] ()
(idots) edge (i7)
(ddots) edge[fault] (d7)
(20dots) edge (217)
(21dots) edge (207)
(30dots) edge (317)
(31dots) edge (327)
(32dots) edge (307)
;

\path[rounded corners,-stealth,draw, thick]
(d0) -- (-.8,2*\y) -- (-.8,8*\y) -- (s)
;
\path[rounded corners,draw, thick]
(200) -- (-.8,3*\y) -- (-.8,3*\y-1)
(300) -- (-.8,5*\y) -- (-.8,5*\y-1)
;



\end{tikzpicture}
\caption{The one-couter safety game~$\game_2$ for the proof of Lemma~\ref{lemma_lbresilienceinitialvertexsafety}.\ref{lemma_lbresilienceinitialvertexsafety_ocs}. Round vertices are in $V_0$, square ones in $V_1$, and disturbance edges are dashed. Player~$0$ wins if and only if $(s,\bot)$ is never visited. Vertices in the upper gray rectangle implement a modulo-$2$ counter while vertices in the lower rectangle implement a modulo-3 counter.}
\label{fig_safetyresillowerbounds_ocs}
\end{figure*}

The winning condition is defined such that Player~$0$ wins a play if and only if the state~$s$ is never reached.
Now, a play starting in the initial vertex of $\game_2$ proceeds as follows:
Player~$1$ either stays in the state~$i$ ad infinitum, and thereby allows Player~$0$ to win, or he eventually moves to some vertex of the form~$(c,A^n\bot)$.
Now, Player~$0$ has three choices, moving to $((2,0),A^n\bot)$, $((3,0),A^n\bot)$, or $(d,A^n\bot)$.
In the first case, there is only one continuation of the play  prefix, which results in a disturbance-free play that is winning for Player~$0$ if and only if $n \bmod 2 \neq 0$.
Similarly, in the second case, there is only one continuation of the play prefix, which results in a disturbance-free play that is winning for Player~$0$ if and only if $n \bmod 3 \neq 0$.
Finally, moving to $(d,A^n\bot)$ means that Player~$0$ wins if strictly less than $n$ disturbances occur in the continuation of the play prefix, but loses if $n$ disturbances occur.

We claim that the initial vertex has resilience~$6 = \prim{2}$. A $6$-resilient strategy for Player~$0$ moves from $(c,A^n\bot)$ to $(d,A^n\bot)$ if $n$ is a multiple of $6$.
Else, it moves to $((p_j,0), A^n\bot)$ for some $p_j \in \set{2,3}$ such that $n \bmod p_j \neq 6$, which always exists.
Applying the reasoning above implies that every play starting in the initial vertex, consistent with the strategy, and with at most $5$ disturbances is winning for Player~$0$.
Thus, the strategy is indeed $6$-resilient.

Now, consider an arbitrary strategy~$\sigma$ for Player~$0$. 
We show that it is not $7$-resilient, which is yields the desired result. 
To this end, consider the unique play prefix leading to $(c,A^6\bot)$, which is consistent with $\sigma$.
If $\sigma$ prescribes a move to some $((p_j,0),A^6\bot)$, then, as argued above, there is disturbance-free play that is consistent with $\sigma$, but losing for her.
The only other choice for $\sigma$ is to move to $(d,A^6\bot)$.
Then, as argued above as well, there is a play that is consistent with $\sigma$ with $6$ disturbances that is losing for her.
In both cases, we have shown that the strategy is indeed not $7$-resilient. 

The general case is obtained by having modulo counters in $\game_k$ for the first $k$ prime numbers instead of only the first two as in $\game_2$. 
Using the same reasoning as above for arbitrary $k$ instead of $k=2$ shows that the initial vertex of $\game_k$ has resilience~$\prim{k}$.

Finally, the number of states of the one-counter system inducing $\game_k$ is bounded by $\bigo(k^3)$.
	
\ref{lemma_lbresilienceinitialvertexsafety_pds}.)
We modify the one-counter safety game~$\game_k$ to obtain a pushdown safety game~$\game_k'$.
We use the stack alphabet~$\set{0,1}$, which allows us to interpret stack contents as binary encodings of natural numbers, with the least significant bit at the top of the stack.
In the following, we give an informal account of the structure of $\game_k'$ and leave the implementation by a pushdown system to the reader.
Here, we reuse the modulo-counters of $\game_k$ which forces that Player~$1$ to reach a stack height that is a multiple of $\prim{k}$, as he would lose otherwise.

Thus, Player~$1$ is initially forced to push a multiple of $\prim{k}$ $1$'s on the stack and then gives control to Player~$0$.
As the stack height is a multiple of $\prim{k}$, she can only go to a state~$d$ where all $0$'s are popped from the stack until the first $1$ is uncovered (note that initially there is no $0$ to pop).
If there is no such $1$, i.e., if the bottom of the stack is reached by removing $0$'s, then the play reaches a losing sink for Player~$0$.
Otherwise, if a $1$ is uncovered, then Player~$0$ only has a self-loop that leaves the stack unchanged, but there is also a disturbance edge that removes the topmost $1$ by a $0$ and hands back control to Player~$1$.
He can now push as many $1$'s as necessary to again reach a stack height that is a multiple of $\prim{k}$.

Now, if Player~$1$ never exceeds the stack height~$\prim{k}$, the stack always contains $\prim{k}$ bits when Player~$0$ gains control.
Assume now that Player~$0$ uses a strategy which moves to $d$ in that situation and uses the correct modulo counter to win in all other situations (as described in more detail above for $\game_k$).
Then, the stack contents reached at the positions where Player~$0$ gains control implement a binary counter with $\prim{k}$ bits that is decremented each time Player~$0$ gains control, starting with the value~$1^{\prim{k}}$.
Hence, as each decrement requires exactly one disturbance (and there are no others), the strategy described above is $(2^{\prim{k}}-1)$-resilient from the initial vertex.

On the other hand, $2^{\prim{k}}-1$ disturbances suffice to reach a stack containing only $0$'s at some configuration where Player~$0$ gains control. 
Then, the unique continuation of that play is losing for her. 

The only other choice for Player~$0$ is to enter a modulo counter at an \myquot{unsuitable} configuration, which also leads to a losing play with less than $2^{\prim{k}}$ disturbances.
Hence, Player~$0$ has no $2^{\prim{k}}$-resilient strategy from the initial vertex, i.e., it has indeed resilience~$2^{\prim{k}}-1$.

Finally, the number of states of the one-counter system inducing $\game_k$ is bounded by $\bigo(k^3)$.
\end{proof}

\section{Resilience in One-counter Safety Games}
\label{sec_ocs}
In this section, we show that one can compute the resilience of the initial vertex in a one-counter safety game in polynomial space, significantly improving the exponential space requirement derived in the previous section.

\begin{theorem}
\label{theorem_resiliencesafetyinpspace}
The following problem can be solved in polynomial space: \myquot{Given a one-counter safety game~$\game$ with initial vertex~$v_\initmark$, determine $r_\game(v_\initmark)$}.
\end{theorem}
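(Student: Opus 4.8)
The plan is to run Algorithm~\ref{algorithm_ocssafety} but to implement every test it performs in polynomial space, and to encode the loop counter $k$ efficiently. Recall that for one-counter systems the bound $b(\pds)$ is only singly exponential, so the loop ``for $k = 1$ to $b(\pds)$'' runs for at most exponentially many iterations; a counter for $k$ therefore fits in polynomially many bits, and incrementing and comparing it against $b(\pds)$ is in polynomial space. The output value $r_\game(v_\initmark)$ is either $\omega+1$ or some $k-1 < b(\pds)$, hence has a polynomial-size representation, so there is no difficulty in writing the answer.

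What remains is to show that each of the three kinds of membership test appearing in the algorithm can be decided in polynomial space for one-counter inputs. First, $v_\initmark \in \winreg_0(\arena_\rig, \safety(F)_\rig)$: by Lemma~\ref{lemma_riggedpushdowngameseffective} (and its explicit $\pspace$ claim for one-counter games) this is decidable in polynomial space, since $\pds_\rig$ is still a one-counter system, of size polynomial in $\size{\pds}$, and $\safety(F)_\rig$ is again a state-based safety condition on it. Second, for a given $k$, the test $v_\initmark \in \winreg_1(\arena_\rig, \safety(F)_\rig \cup R_{\ge k})$. The point here is that $R_{\ge k}$ is not a fixed safety condition but counts occurrences of disturbance-marker vertices up to $k$; since $k$ can be exponentially large, we cannot afford to bake a counter modulo $k$ into the finite control. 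Instead I would push this counting onto the stack: augment $\pds_\rig$ so that each simulated disturbance (each visit to a vertex of $D$) pushes an extra marker, and the game becomes a one-counter safety game on a modified system where Player~$1$ wins if either the original safety objective of $\arena_\rig$ is violated or the ``disturbance counter'' reaches $k$. Concretely, one runs a product one-counter safety game whose single counter tracks the number of simulated disturbances; the target $k$ (in binary) is part of the instance, so I appeal to a counter-reachability/safety analysis on one-counter games with a binary-encoded threshold, which is solvable in polynomial space (this is exactly the kind of routine that yields the $\expspace$ bound for pushdown in Lemma~\ref{lemma_riggedpushdowngameseffective_k}, but collapses to $\pspace$ in the one-counter case). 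Alternatively, and perhaps more cleanly, one observes that the finite resilience values realizable along any play of a one-counter safety game are governed by a small ``pattern'' structure (stack height versus number of disturbances absorbed), which is the same structural insight underlying the bound $b(\pds)$; exploiting it lets one replace the explicit loop by a single polynomial-space computation that directly reads off $r_\game(v_\initmark)$, by analysing the function $n \mapsto r_\game(q, A^n\bot)$, which is eventually of a simple shape.

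The main obstacle is the second kind of test: handling the threshold $k$, which ranges up to an exponential value, without an exponential blowup in space. The naive reading of Algorithm~\ref{algorithm_ocssafety} already gives $\expspace$ (exponentially many iterations, each a $\pspace$ subcall, but the iteration index itself is fine — the real cost is that encoding $R_{\ge k}$ as a safety game over $\arena_\rig$ with a modulo-$k$ counter in the control blows the state space up exponentially). The resolution is to move that counting into the one counter already available (or into a second counter handled by a standard simulation of a one-counter machine with a binary-encoded bound, which stays in $\pspace$), and to use the combinatorics from the proof of Lemma~\ref{lemma_ubresilienceinitialvertexsafety} to argue that only polynomially much information about $k$ ever needs to be tracked simultaneously. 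Once the threshold is handled in polynomial space, composing it with Savitch's theorem and the $\pspace$ solvability of one-counter safety games gives the claimed bound, and reusing the strategy-extraction part is not needed here since the theorem only asks for the value $r_\game(v_\initmark)$.
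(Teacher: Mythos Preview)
Your framework is right: run Algorithm~\ref{algorithm_ocssafety}, note that $b(\pds)$ is singly exponential for one-counter systems so the loop index fits in polynomially many bits, and use Lemma~\ref{lemma_riggedpushdowngameseffective} for Line~1. You also correctly isolate the real difficulty, namely deciding $v_\initmark \in \winreg_1(\arena_\rig, \safety(F)_\rig \cup R_{\ge k})$ in polynomial space when $k$ may be exponential.

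However, your proposed resolutions of that difficulty do not work. The suggestion to ``push the disturbance counting onto the stack'' is broken: the single counter of $\pds_\rig$ is already the game's stack, and you cannot simultaneously use it to encode the original stack height and an independent disturbance tally without destroying the game's structure (and a genuine two-counter encoding would be undecidable). The alternative appeal to a generic ``counter-reachability with binary threshold in \pspace'' result is not substantiated---there is no such off-the-shelf lemma being invoked in the paper, and it would in any case have to be proved for a \emph{game} (alternating reachability), not just a system. The final ``pattern structure'' remark is too vague to count as an argument.

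The paper's route is substantively different and supplies exactly the missing idea. It shows (via a hill-cutting/summarization argument that additionally tracks, for each hill, the \emph{maximal number of simulated disturbances} it contains) that if Player~$1$ wins $\game_k$ from $v_\initmark$ then he already wins with a \emph{positional} strategy whose reachable stack heights are bounded by $(2k)^{\size{Q}^2}$, which is singly exponential in the input. Such a strategy is then packaged as a finite \emph{strategy graph}: the reachable vertices together with two integer labels per vertex, one bounding the remaining simulated disturbances ($< k$) and one bounding the distance to $F$. All labels are at most exponential, hence polynomially representable, and the defining conditions are purely local (they relate a vertex only to its neighbours at stack heights $n-1,n,n+1$). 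Existence of a strategy graph can therefore be guessed and verified slice by slice, maintaining only three consecutive stack levels at a time---this is the \pspace{} procedure. Your proposal lacks both ingredients: the stack-height bound that takes the disturbance budget into account, and the locally-checkable certificate that lets one avoid materializing the exponential object.
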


To prove this result, we show that one can implement Algorithm~\ref{algorithm_ocssafety} in polynomial space if the underlying pushdown system is one-counter.
In this case, one can run the check \myquot{$v_\initmark \in \winreg_0(\arena_\rig, \safety(F)_\rig) $} in Line~1 in polynomial space due to Lemma~\ref{lemma_riggedpushdowngameseffective}, and can implement the counter in Line~3 in polynomial space, as the upper bound~$b(\pds)$ is exponential (see the definition on Page~\pageref{bounddef}).
It remains to show that one can check
in polynomial space, for a given $k\le b(\pds)$, if
$v_\initmark\in\winreg_1(\arena_\rig, \safety(F)_\rig \cup R_{\ge k})$ holds.
In the rest of this section we show that this is indeed possible.

Fix, the rigged game~$\game_k = (\arena_\rig, \safety(F)_\rig \cup R_{\ge k})$ for some $k \le b(\pds)$ with $\arena_\rig = (V', V_0', V_1', E', \emptyset)$, with initial vertex~$v_\initmark$, where $\pds$ is the OCS underlying the original game~$\game$ that induces $\game_k$.
We show that the existence of winning strategies for Player~$1$ in $\game_k$
can be witnessed by a finite graph structure, as follows.

A \emph{strategy graph} for $\game_k$ is a tuple~$(V^\sgmark, E^\sgmark, \sgrankr^\sgmark, \sgrankd^\sgmark)$ with $\sgrankr^\sgmark\colon V^\sgmark \rightarrow \set{0,\ldots, k-1}$ and $\sgrankd^\sgmark\colon V^\sgmark \rightarrow \set{0, \ldots, \size{V^\sgmark}}$ such that the following properties are satisfied:
\begin{enumerate}
	\item\label{graphproperty:graph} $(V^\sgmark, E^\sgmark)$ is a directed graph with $V^\sgmark \subseteq V'$, $E^\sgmark \subseteq E'$, $v_I \in V^\sgmark$, and $\sh(v) \le (2k)^{\size{Q}^2}$ for all $v \in V^\sgmark$. Note that $(2k)^{\size{Q}^2}$ is exponential in the size of the pushdown system~$\pds$ underlying $\game$, even though $k \le b(\pds)$ may itself be exponential.
	
	\item\label{graphproperty:strategy0} For all $v \in (V^\sgmark \cap V_0') \setminus F$ and all $(v,v') \in E'$, we have $(v,v') \in E^\sgmark$.
	 
	\item\label{graphproperty:strategy1} For all $v \in (V^\sgmark \cap V_1') \setminus F$ there is a unique outgoing edge~$(v,v') \in E'$ with $(v,v') \in E^\sgmark$.
	
	\item\label{graphproperty:valuesr} For all $(v,v') \in E^\sgmark$, we have $\sgrankr^\sgmark(v) \ge \sgrankr^\sgmark(v')$ with strict inequality if $v \in D$.
	
	\item\label{graphproperty:valuesd} For all $(v,v') \in E^\sgmark$, we have $\sgrankd^\sgmark(v) > \sgrankd^\sgmark(v')$. 
	
\end{enumerate}

\begin{lemma}
\label{lemma_stratgraphcharacterizeswinninggamek}
Player~$1$ wins $\game_k$ from $v_\initmark$ if and only if there exists a strategy graph for $\game_k$.
\end{lemma}

To simplify the proof, we transform $\game_k$ into a game~$\game_k'$ where all reachable vertices in $F$ are sinks of stack height zero. 
To do this, we replace all outgoing (standard and disturbance) edges of vertices~$(q,A^n\bot) \in F$ with $n>0$ by an edge to $(q,A^{n-1}\bot)$ (which is also in $F$) and the all outgoing (standard and disturbance) edges of vertices~$(q,\bot) \in F$ by an edge to a sink vertex~$(q_f,\bot)$, where $q_f$ is a fresh state. 
Then, $\game_k'$ is the game played in the modified arena with winning condition~$\safety(\set{q_f})_\rig \cup R_{\ge k}$.
Intuitively, once a vertex in $F$ is reached in the modified arena, the players no longer have strategic choices; 
instead, the stack is emptied (without simulating any disturbances) and the unsafe sink vertex~$(q_f,\bot)$ is reached.

It is straightforward to verify that we have $v \in \winreg_i(\game_k)$ if and only $v \in \winreg_i(\game_k')$ for every vertex of $\arena_\rig$ and $i \in \set{0,1}$ by transferring winning strategies between the games.
So, in the following, we assume without loss of generality, that the only vertices of $\game_k$ in $F$ that are reachable from the initial vertex are sinks of stack height zero.
In this situation, a play can no longer simulate a disturbance edge once a vertex in $F$ has been reached. 

To prove Lemma~\ref{lemma_stratgraphcharacterizeswinninggamek}, we show that if Player~$1$ wins $\game_k$ with some arbitrary winning strategy, then also with a winning strategy that can be turned into a strategy graph.
To simplify our notation, given a strategy~$\tau$, let $\maxheight(\tau) = \sup_v \sh(v)$, where $v$ ranges over all vertices reachable by a play prefix starting in $v_\initmark$ that is consistent with $\tau$, i.e., $\maxheight(\tau)$ is the maximal stack height visited by a play that is starting in the initial vertex and consistent with $\tau$.
Using this, we show that Player~$1$ wins $\game_k$ from $v_\initmark$ if and only if he has a positional winning strategy from $v_\initmark$ with $\maxheight(\tau) \le (2k)^{\size{Q}^2}$.  
The latter can then be transformed into a strategy graph.

We only have to consider the implication from left to right, as the other one is trivial.
Let Player~$1$ win $\game_k$ from $v_\initmark$, i.e., he has a winning strategy~$\tau$ for $\game_k$ from $v_\initmark$.
We proceed in two steps:
First, We turn $\tau$ in a positional winning strategy~$\tau'$ from $v_\initmark$ (Lemma~\ref{lemma_positionalstategies}).
Then, we turn $\tau'$ into a positional winning strategy~$\tau''$ with $\maxheight(\tau'') \le (2k)^{\size{Q}^2}$ (Lemma~\ref{lemma_positionalstategiesexponentialstackheightgamek}).

For the first step, we generalize a standard argument for turning an arbitrary, not necessarily positional, winning strategy~$\tau$ in a reachability game into a positional one: 
At a vertex~$v \notin F$, consider all play prefixes that are consistent with $\tau$ and end in $v$, and mimic the move $\tau$ prescribes for a longest one (call it $\rep(v)$). 
The resulting strategy~$\tau'$ is obviously positional and winning as every play consistent with $\tau'$ and ending in some $v \notin F$ can be shown to be at most as long as the play~$\rep(v)$ whose moves are mimicked to define $\tau'(v)$.
Here, we have to refine this argument to ensure that the resulting strategy~$\tau'$ still simulates at most $k-1$ disturbances during each play.

\begin{lemma}
\label{lemma_positionalstategies}
If Player~$1$ wins $\game_k$ from $v_\initmark$ then he has a positional winning strategy for $\game_k$ from $v_\initmark$.
\end{lemma}

\begin{proof}
Assume a winning strategy $\tau$ for Player~$1$ from $v_\initmark$.
Let us call a play prefix~$v_0 \cdots v_j$ \emph{unsettled} if it starts in $v_\initmark$, is consistent with $\tau$, and 
no strict prefix contains a vertex in the target $F$.
Notice that there must be a uniform bound~$\ell \in \omega$ such that $\size{w} < \ell$ for every unsettled $w$.
Indeed, if there was no such bound, then it is possible to arrange an infinite set of arbitrarily long play prefixes not visiting $F$ into an infinite finitely branching tree. 
By König's Lemma, this tree has an infinite path which corresponds to an infinite play starting in $v_\initmark$, consistent with $\tau$, but not containing a vertex in $F$,
which contradicts the assumption that $\tau$ is winning.

Given an unsettled prefix~$w$, let $\val(w) = d \cdot \ell + \size{w}$ where $d$ is the number of simulated disturbances during $w$. 
Let $U(v)$ for $v \in V'$ denote the set of unsettled play prefixes ending in $v$. 
Further, for every $v \in V$ with non-empty $U(v)$ let $\rep(v)$ be an element from $U(v)$ such that $\val(\rep(v)) \ge \val(w)$ for all $w \in U(v)$.
Such an element exists, as the $\val(w)$ for $w \in U(v)$ are bounded by $ k \cdot\ell -1$:
Each unsettled prefix is consistent with the winning strategy~$\tau$, which implies that it simulates at most $k-1$ disturbance edges, and its length is bounded by $\ell-1$, as argued above.

Based on this we define the positional strategy~$\tau'$ via $\tau'(v) = \tau(\rep(v))$ if $\rep(v)$ is defined and $\tau'(v) = v'$ for some arbitrary successor~$v'$ of $v$ if $\rep(v)$ is undefined (note that it suffices to define $\tau'(v)$ for $v \in V_1$ to define a positional strategy for Player~$1$).
We claim that $\tau'$ is winning from $v_\initmark$.
To this end, let $\rho = v_0v_1v_2 \cdots$ start in $v_\initmark$ and be consistent with $\tau'$. 
We need to show that it visits a vertex in $F$ and that it simulates at most $k-1$ disturbance edges.

A simple induction shows that every length-$j$ prefix $v_0 \cdots v_{j-1}$ that does not visit $F$ must satisfy that
\begin{equation}
\val(v_0 \cdots v_j) \le \val(\rep(v_j))
\hfill \label{eq}
\end{equation}
The induction start $j = 0$ is trivial, as we have $v_0 = v_\initmark$ and $v_\initmark \in U(v_\initmark)$, which implies $\val(v_\initmark ) \le \val(\rep(v_\initmark))$ as required.

For the induction step, consider some $j > 0$ such that $v_0 \cdots v_{j-1}$ does not visit $F$. 
The induction hypothesis yields $\val(v_0 \cdots v_{j-1}) \le \val(\rep(v_{j-1}))$. 
Let $ \rep(v_{j-1}) = wv_{j-1}$, which is consistent with $\tau$.
If $v_{j-1} \in V_0'$, then $wv_{j-1}v_j$ is consistent with $\tau$ as well, as it is Player~$0$'s turn at $v_{j-1}$.
Similarly, if $v_{j-1} \in V_1'$, then we have 
\[v_j = \tau'(v_0 \cdots v_{j-1}) = \tau(\rep(v_{j-1})) = \tau(wv_{j-1}). \]
Hence, $wv_{j-1}v_j$ is again consistent with $\tau$.
Furthermore, $wv_{j-1}$ does not contain a vertex in $F$, as $v_{j-1}$ is not in $F$ (recall that vertices in $F$ are sinks).
Thus, we conclude that $wv_{j-1}v_j$ is unsettled, which implies $\val(wv_{j-1}v_j) \le \val(\rep(v_j))$, by our definition of $\rep(v_j)$.
To finish the induction step,
let $x = 1$ if $ v_j \in D$, i.e. a disturbance edge is simulated, and $x = 0$ otherwise.
Then, we have \begin{align*}
\val(v_0 \cdots v_j) ={}& \val(v_0 \cdots v_{j-1}) + x\cdot \ell + 1 \\
\le{}& \val(\rep(v_{j-1})) + x\cdot \ell + 1 \\
={}& \val(wv_{j-1}) + x\cdot \ell + 1 \\
={}& \val(wv_{j-1}v_j) \le \val(\rep(v_j)).
\end{align*}

Applying Equation~\ref{eq}, we can show that $\rho$ is indeed winning.
First, towards a contradiction, assume $\rho$ does not visit a vertex in $F$.
Then, Equation~\ref{eq} is applicable to every prefix $v_0 \cdots v_j$ and we thus obtain for every $j>0$, that
\[
j+1 = \size{v_0 \cdots v_j} \le \val(v_0 \cdots v_j) \le \val(\rep(v_j))
\le k \cdot\ell -1
\]
which is a contradiction as the term on the right is constant.

Second, again towards a contradiction, assume that $\rho$ simulates at least $k$ disturbance edges.
Then, let $j$ be minimal such that the prefix~$v_0 \cdots v_j$ simulates exactly $k$ disturbance edges.
As vertices in $F$ are sinks, and therefore have no outgoing edges simulating disturbance edges, Equation~\ref{eq} is applicable to $v_0 \cdots v_j$ and we obtain that
\[
k\cdot \ell \le \val(v_0 \cdots v_j) \le \val(\rep(v_j))
\]
which is impossible as $\val(\rep(v_j)) \le k \cdot\ell -1$.
Consequently, $\rho$ visits $F$ and simulates at most $k-1$ disturbance edges. As $\rho$ was an arbitrary play consistent with $\tau'$, this strategy is indeed winning.
\end{proof}

The second step of our construction is to bound the stack height reached by plays consistent with the winning strategy (while preserving positionality).
To this end, we generalize a classical argument for pushdown safety games:
In such games, Player~$1$, who has a reachability objective, has a positional winning strategy~$\tau$ from $v_\initmark$ with exponentially bounded~$\maxheight(\tau)$, if he wins at all from $v_\initmark$.
This is typically proven by a \myquot{hill-cutting} argument
\cite{DBLP:journals/jcss/BohmGJ14,Val1973} 
showing that a winning strategy exceeding this bound can be turned into one of smaller maximal stack height by removing infixes of plays that increase the stack without reaching states that have not been reached at smaller stack height already. 
Here, we again have to generalize this argument to additionally ensure that the number of simulated disturbances remains bounded by $k-1$. This is done using ``summarizations'' of paths in pushdown systems (see e.g.~\cite{RepsHS95,HMM2016}) that take the number of disturbances into account.

\begin{lemma}
\label{lemma_positionalstategiesexponentialstackheightgamek}
If Player~$1$ wins $\game_k$ from $v_\initmark$ then he has a positional winning strategy from $v_\initmark$ with $\maxheight(\tau) \le (2k)^{\size{Q}^2}$.  
\end{lemma}

\begin{proof}
By Lemma~\ref{lemma_positionalstategies}
we can pick a positional strategy~$\tau$ for Player~$1$ that is winning $\game_k$ from $v_\initmark$.
We show how to turn this into a winning strategy that satisfies the claim.

Notice first that $\maxheight(\tau)$ must be finite.
Indeed, if it is unbounded, then for every $n \in \omega$ there is a play prefix~$w_n$ starting in $v_\initmark$, consistent with $\tau$, and ending in a vertex of stack height~$n$.
As the stack height is increased by at most one during each move, we have $\size{w_n} \ge n$.
Furthermore, as vertices in $F$ are sinks, these play prefixes can be assumed to not contain a vertex in $F$.
The prefixes~$w_n$ can be arranged in an infinite finitely branching tree. 
By König's Lemma, this tree has an infinite path, which corresponds to an infinite play starting in $v_\initmark$, consistent with $\tau$, but not visiting a vertex in $F$.
This contradicts $\tau$ being a winning strategy. 

It suffices to show that
if $\maxheight(\tau) > (2k)^{\size{Q}^2}$, then $\tau$ can be turned into a positional winning strategy~$\tau'$ from $v_\initmark$ with strictly smaller maximal stack height. 
%
For the sake of readability, we will identify a stack content~$A^n\bot$ of the one-counter system underlying $\game_k$ by the number~$n \in \omega$.
Hence, vertices of $\game_k$ are from now on denoted by $(q,n)$ with $n \in \omega$.

Let $R$ denote the set of vertices reachable from $v_\initmark$ via play prefixes that are consistent with $\tau$.
For $(q,n) \in R$ with $n > 0$ let $H(q,n)$ be the set of vertices of the form~$(q',n-1)$ reachable from $(q,n)$ via a play prefix~$(q,n)(q_1,n_1) \cdots (q_j,n_j)(q',n-1)$ that is consistent with $\tau$ and such that $n_{j'} \ge n$ for every $j' \in \set{1, \ldots, j}$, i.e., the last vertex of the play prefix is the first time the stack height along the play prefix is strictly smaller than $n$.
We call such a play prefix a \emph{hill} from $(q,n)$ to $(q',n-1)$.

For all $n > 0$ define the partial function~$h_n \colon Q \rightarrow 2^Q$ that maps $q$ to $H(q,n)$ whenever $(q,n) \in R$, and else leaves $h_n(q)$ undefined.
Similarly, define the partial function~$d_n \colon Q \times Q \rightarrow \set{0, \ldots, k-1}$ by mapping each pair~$(q,q')$ with $q' \in H(q,n)$ to the maximal number of disturbances simulated during any hill from $(q,n)$ to $(q',n-1)$.
This value is bounded by $k-1$, as each hill is part of a play that is consistent with $\tau$.
For $(q,q')$ with $q' \notin H(q,n)$, we leave $d_n(q,q')$ undefined.

There are at most $(2k)^{\size{Q}^2}$ many different pairs of such functions $h_n$ and $d_n$.
Hence, if $R$ contains a vertex~$(q,n)$ with $n > (2k)^{\size{Q}^2}$, then there are $0 < n_\ell < n_u$ such that $h_{n_\ell} = h_{n_u}$ and $d_{n_\ell} = d_{n_u}$.
Let $s = n_u -n_\ell$.
We define the positional strategy~$\tau'$ via $\tau'(q,n) = \tau(q,n)$, if $n < n_\ell$ and $\tau'(q,n) = \tau(q, n + s)$ if $n \ge n_\ell$ (recall that it suffices to define $\tau'(v)$ for every $v \in V_1'$ to define a positional strategy~$\tau'$). 

We claim that $\tau'$ is still winning for Player~$1$ from $v_\initmark$ in $\game_k$.
To this end, consider an arbitrary play 
\[\rho' = (q_0, n_0)(q_1, n_1)(q_2, n_2) \cdots\] that starts in $v_\initmark$ and is consistent with $\tau'$.
We need to show that it visits $F$ and simulates at most $k-1$ disturbance edges.

If every $n_j$ is strictly smaller than $n_\ell$, then $\rho'$ is also consistent with $\tau$, as only the first case of the definition of $\tau$ is applied.
Hence, it is winning for Player~$1$, as $\tau$ is a winning strategy from $v_\initmark$.

It remains to consider the case where $\rho'$ reaches stack height~$n_\ell$.
Here, we turn $\rho'$ into a play~$\rho$ starting in $v_\initmark$ and consistent with $\tau$, which implies that $\rho$ visits $F$ and simulates at most $k-1$ disturbance edges.
Using the relation between $\rho$ and $\rho'$, we argue that the $\rho'$ is also winning. 

The following remark is useful throughout our argument and follows immediately from the fact that at stack heights~$n \ge n_\ell$, $\tau'$ mimics the behavior of $\tau$ at stack height~$n+s$. 

\begin{remark}
\label{remark_copycatstrategy}	
	Let $j$ and $j'$ be positions of $\rho'$ such that $n_j = n_\ell$ and $n_{j''} \ge n_\ell$ for every $j'' \in \set{j+1, \ldots, j'}$, i.e., the infix between positions~$j$ and $j'$ starts at stack height~$n_\ell$ and never reaches a smaller stack height. Then, $(q_j, n_j+s) \cdots (q_{j'+1}, n_{j'+1}+s)$ is consistent with $\tau$ (note the $+1$!).
\end{remark}

We inductively construct $\rho$ by defining a sequence~$(w_m)_{m \in \omega}$ of strictly increasing prefixes whose limit is $\rho$.
To define this sequence, we simultaneously construct a sequence~$(j_m)_{m \in \omega}$ of strictly increasing positions of $\rho'$.
During the construction, we satisfy the following invariant: 
Each $w_m$ is consistent with $\tau$, ends in $(q_{j_m}, n_{j_m})$ where $n_{j_m}$ is strictly smaller than $n_\ell$, and $w_m$ simulates at least as many disturbances as $(q_0, n_0) \cdots (q_{j_m}, n_{j_m})$.

We start with $j_0 = 0$ and $w_0 = (q_0, n_0) = v_\initmark$, which satisfies the invariant due to our choice of $n_\ell$ being greater than zero.
We define $w_m$ and $j_m$ for $m >0$, based on $w_{m-1}$ and $j_{m-1}$, as follows.
Due to the invariant, $w_{m-1}$ ends in $(q_{j_{m-1}}, n_{j_{m-1}})$ with $n_{j_{m-1}} < n_\ell$ and is consistent with $\tau$.

We consider two cases.
In the first,  if $(q_{j_{m-1}+1}, n_{j_{m-1}+1})$, the next vertex after $(q_{j_{m-1}}, n_{j_{m-1}})$ in $\rho'$, satisfies $n_{j_{m-1}+1} < n_\ell$, then define $w_{m} = w_{m-1}(q_{j_{m-1}+1}, n_{j_{m-1}+1})$ and $j_m = j_{m-1}+1$.
Note that the move from $(q_{j_{m-1}}, n_{j_{m-1}})$ to $(q_{j_{m-1}+1}, n_{j_{m-1}+1})$ is consistent with $\tau$, as it is either Player~$0$'s turn or the first case of the definition of $\tau'$ is applied (which mimics $\tau$) due to our invariant. 
Hence, $w_m$ is again consistent with $\tau$. 
Similarly, the requirement on the number of simulated disturbances is satisfied as the same edge is used to extend both play prefixes.

In the second case, we have $n_{j_{m-1}+1} \ge n_\ell$, which implies $n_{j_{m-1}+1} = n_\ell$, as the stack height can increase by at most one during every transition.

We claim there is some $j > j_{m-1}+1$ such that $n_j = n-1$. 
Towards a contradiction, assume there is no such $j$. 
Then, Remark~\ref{remark_copycatstrategy} is applicable to every pair $(j_{m-1}+1,j)$ with $j > j_{m-1}+1$. 
This yields an infinite play~$\rho_c$ equal to 
\[
(q_{j_{m-1}+1}, n_{j_{m-1}+1}+s) 
(q_{j_{m-1}+2}, n_{j_{m-1}+2}+s) 
(q_{j_{m-1}+3}, n_{j_{m-1}+3}+s)\cdots
\]
that is consistent with $\tau$.
The play prefix~$w_{m-1}$ starts in $v_\initmark$, is consistent with $\tau$, and ends in $(q_{j_{m-1}}, n_{j_{m-1}})$.
Further, the move from $(q_{j_{m-1}}, n_{j_{m-1}})$ to $(q_{j_{m-1}+1}, n_{j_{m-1}+1})$ in $\rho'$ is consistent with $\tau'$ and therefore also with $\tau$, as $n_{j_{m-1}} < n_\ell$ by our invariant.
Thus, we have shown $(q_{j_{m-1}+1}, n_{j_{m-1}+1}) \in R$, i.e., there is a play prefix~$w_c$ starting in $v_\initmark$, consistent with $\tau$, and ending in $(q_{j_{m-1}+1}, n_{j_{m-1}+1})$.
Altogether, we can combine $w_c$ and $\rho_c$ into an infinite play starting in $v_\initmark$ and consistent with $\tau$ that has $\rho_c$ as suffix.
Now, $\rho_c$ contains by construction no vertex of stack height zero.
As vertices in $F$ are sinks of stack height zero, the combined play can  not visit $F$.
This contradicts the assumption that $\tau$ is winning from~$v_\initmark$. 

Thus, let $j > j_{m-1}+1$ be minimal such that $n_j = n-1$. 
Applying Remark~\ref{remark_copycatstrategy} for $j_{m-1}+1$ and $j-1$ shows that 
\[
w = (q_{j_{m-1}+1}, n_{j_{m-1}+1}+s) \cdots (q_{j}, n_{j}+s)
\]
is a hill from $(q_{j_{m-1}+1}, n_{j_{m-1}+1}+s) = (q_{j_{m-1}+1}, n_u)$ to $(q_{j}, n_{j}+s) = (q_{j}, n_u-1)$.
Hence, by the choice of $n_\ell$ and $n_u$ there is also a hill~$w'$ from 
$(q_{j_{m-1}+1}, n_\ell)$ to $ (q_{j}, n_\ell-1)$ that has at least as many simulated disturbances as $w$.

We obtain $w_m$ from $w_{m-1}$ by appending $w'$ and define $j_m = j$.
The requirement on the stack height~$n_{j_m}$ is satisfied by our choice of $j_m = j$ while $w_m$ is consistent with $\tau$, as $w_{m-1}$, the move from $(q_{j_{m-1}}, n_{j_{m-1}})$ (the last vertex of $w_{m-1}$) to $(q_{j_{m-1}+1}, n_{j_{m-1}+1})$ (the first vertex of $w'$), and $w'$ are all consistent with $\tau$.
The requirement on the number of simulated disturbances is satisfied, as $(q_{j_{m-1}+1}, n_{j_{m-1}+1}) \cdots (q_{j}, n_{j})$ simulates the same number of disturbances as $w$, which is at most the number of disturbances simulated by $w'$. 

Consider the resulting play~$\rho$, which is by construction winning for Player~$1$ and consequently simulates at most $k-1$ disturbances.
An inductive application of the invariant above shows that $\rho'$ therefore also simulates at most $k-1$ disturbances.
Furthermore, $\rho$ visits a vertex in $F$, which has stack height zero. 
When such a vertex is added during the inductive construction described above, then only in the first case (when $n_{j_{m-1}+1} < n_\ell$) and only because the same vertex appears in $\rho'$, i.e., $\rho'$ visits $F$ as well.
Hence, $\rho'$ is indeed winning for Player~$1$.

To conclude, we have to show $\maxheight(\tau') < \maxheight(\tau)$.
An induction on $n$ shows that if $(q,n)$ is reachable from $v_\initmark$ by a play prefix that is consistent with $\tau'$, then:
\begin{itemize}
	
	\item If $n \le n_\ell$, then $(q,n)$ is reachable from $v_\initmark$ by a play prefix that is consistent with $\tau$.
	
	\item If $n > n_\ell$, then $(q,n+s)$ is reachable from $v_\initmark$ by a play prefix that is consistent with $\tau$.

\end{itemize}
This implies $\maxheight(\tau') + s \le \maxheight(\tau)$, which yields the desired bound due to $s >0$.
\end{proof}

A positional strategy as in Lemma~\ref{lemma_positionalstategiesexponentialstackheightgamek} is essentially a strategy graph.
So, we have proven Lemma~\ref{lemma_stratgraphcharacterizeswinninggamek}: The existence of strategy graphs for $\game_k$ captures Player~$1$ winning $\game_k$. 

\begin{proof}[Proof of Lemma~\ref{lemma_stratgraphcharacterizeswinninggamek}]
Let Player~$1$ win $\game_k$ from $v_\initmark$.
Then, Lemma~\ref{lemma_positionalstategiesexponentialstackheightgamek} yields a positional winning strategy~$\tau$ for him from $v_\initmark$ with $\maxheight(\tau) \le (2k)^{\size{Q}^2}$.
We turn $\tau$ into a strategy graph for $\game_k$ by defining
\begin{itemize}

	\item $V^\sgmark$ to be the set of vertices visited by plays starting in $v_\initmark$ that are consistent with $\tau$,
	
	\item $E^\sgmark$ to be the set of edges traversed by these plays (ignoring the self-loops at vertices in $F$), 
	
	\item $\sgrankr^\sgmark(v)$ to be the maximal number of disturbance edges simulated on plays starting in $v$ that are consistent with $\tau$, and
	
	\item $\sgrankd^\sgmark(v)$ to be the maximal length of a play prefix starting in $v$, being consistent with $\tau$, and the last vertex (but no other) being in $F$.

\end{itemize}
It is straightforward to prove that $(V^\sgmark, E^\sgmark, \sgrankr^\sgmark, \sgrankd^\sgmark)$ satisfies all properties required of a strategy graph for $\game_k$.

Conversely, assume there is a strategy graph~$(V^\sgmark,E^\sgmark,\sgrankr^\sgmark,\sgrankd^\sgmark)$ for $\game_k$.
We turn it into a positional winning strategy~$\tau$ for Player~$1$ from $v_\initmark$.
Let $v \in V_1'$.
If $v \in V^\sgmark \setminus F$, then there is a unique outgoing edge~$(v,v') \in E^\sgmark \cap E'$ due to Property~\ref{graphproperty:strategy1} of the strategy graph definition.
Then, we define $\tau(v) = v'$.
Otherwise, i.e., if $v \notin V^\sgmark \setminus F$, then define $\tau(v)$ to be an arbitrary successor of $v$ in $\arena_\rig$.
We claim that $\tau$ is indeed a winning strategy for Player~$1$ for $\game_k$ from $v_\initmark$.

To this end, let $\rho = (v_0, 0) (v_1, 0) (v_2, 0) \cdots$ be a play starting in $v_\initmark$ that is consistent with $\tau$.
We need to show that $\rho$ is winning for Player~$1$, i.e., that it visits $F$ and contains at most $k-1$ simulated disturbance edges.

An induction applying the definition of $\tau$ and Property~\ref{graphproperty:strategy0} of the strategy graph definition shows that if $v_0 \cdots v_{j-1}$ does not contain a vertex from $F$, then $v_0 \cdots v_{j}$ is a path through the graph~$(V^\sgmark, E^\sgmark)$. 
Hence, we have $ \sgrankd^\sgmark(v_0) > \sgrankd^\sgmark(v_0) > \cdots > \sgrankd^\sgmark(v_{j-1}) > \sgrankd^\sgmark(v_j)$ by Property~\ref{graphproperty:valuesd}.
As the range of $\sgrankd^\sgmark$ is finite, this yields an upper bound on the length of prefixes of $\rho$ that do not visit $F$, which implies that $\rho$  contains a vertex of $F$. 
Hence, let $j$ be the minimal position of $\rho$ with $v_j \in F$. 
As vertices in $F$ are sinks, no disturbance edges are simulated in $\rho$ after position~$j$.
Due to Property~\ref{graphproperty:valuesr}, we have $ \sgrankr^\sgmark(v_0) \ge \sgrankd^\sgmark(v_0) \ge \cdots \ge \sgrankr^\sgmark(v_{j-1}) \ge \sgrankr^\sgmark(v_j)$ with strict inequality whenever a disturbance edge is simulated, as $v_0 \cdots v_j$ is a path through $(V^\sgmark, E^\sgmark)$ as argued above. 
Hence, as the range of $\sgrankr^\sgmark$ has at most $k$ elements, there are at most $k-1$ simulated disturbances in $\rho$ before position~$j$ and none afterwards, as argued above.
Altogether, $\rho$ visits $F$ and contains at most $k-1$ simulated disturbance edges, i.e., it is indeed winning for Player~$1$ in $\game_k$. 
\end{proof}

Hence, it remains to prove that we can decide the existence of strategy graphs in polynomial space.
Here, we use the fact that $k$ is at most $b(\pds) \in \bigo(2^{\size{\pds}^2})$, where $\pds$ is the pushdown system underlying the game inducing $\game_k$, to guess and verify a strategy graph in polynomial space.

\begin{lemma}
\label{lemma_stratgraphexistenceinpspace}
The following problem is in $\pspace$: \myquot{Given a one-counter safety game~$\game$ induced by a PDS~$\pds$ and $k \le b(\pds)$ (encoded in binary), is there a strategy graph for $\game_k$?}.
\end{lemma}

\begin{proof}
Notice that all defining conditions of strategy graphs are local, and can be verified
for a vertex $v=(q,n)$ if the values of $\sgrankr^\sgmark(v')$ and $\sgrankd^\sgmark(v')$ are known for all direct neighbors,
which have the form $(q',n')$ with $n'\in\set{n-1,n,n+1}$.
A strategy graph can therefore be guessed and verified on the fly,
keeping in memory these values for vertices in $Q \times \set{n, n+1, n+2}$ while incrementing $n$ from $0$ to $(2k)^{\size{Q}^2}$.
This requires polynomial space, both for the labeling of the vertices (as the numbers are at most exponential in the size of the input) and for the counter.
\end{proof}

While we consider one-counter systems with unit updates, i.e., each transition changes the counter value by at most one, our results are also applicable to one-counter systems where each transition updates the counter by some integer (encoded in binary).
Such \emph{binary updates} can be simulated by unit updates, albeit with an exponential blowup.
Hence, the algorithm above computes the resilience of the initial vertex of a one-counter safety game with binary updates in exponential space.
A matching lower bound follows from the $\expspace$ hardness of solving disturbance-free one-counter safety games with binary updates~\cite{Hunter15}.

\section{Beyond Safety: Reachability Games with Disturbances}

Up to now, we were concerned with pushdown safety games with disturbances and have shown that they provide a rich model with interesting properties that go beyond the rather straightforward case of finite safety games with disturbances.
Nevertheless, there are many other winning conditions that can be studied in pushdown games with disturbances.
Probably the simplest class of winning condition besides safety conditions are reachability conditions:
Given a set~$F \subseteq V$ of vertices, the reachability condition~$\reach(F) = \set{ v_0v_1v_2 \cdots \mid v_j \in F \text{ for some }j \in \omega }$ requires to visit $F$ at least once.
While safety and reachability conditons are dual, games with disturbances are asymmetric.
Thus, we cannot directly transfer results for safety to reachability games and vice versa.

\subsection{Resilience in Pushdown Reachability Games}

Many of our results trivially carry over to reachability conditions while others can be recovered with some more effort.
We begin by claiming that, in contrast to safety games, all possible resilience values can be realized in reachability games. 

\begin{lemma}
\label{lemma_allresiliencevaluesinreachability}
All possible resilience values~$\alpha \in \omega+2$ can be realized in a one-counter reachability game that has vertices of resilience~$\omega$ with and without a uniform witness.
\end{lemma}

\begin{proof}
Consider the one-counter reachability game~$\game$ presented in Figure~\ref{fig_reachresilvalues} where the reachability condition is induced by the doubly-lined vertices, i.e., Player~$0$ wins if and only if a doubly-lined vertex is visited.
For every $\alpha \in \omega +2$, there is a vertex~$v$ with $r_\game(v) = \alpha$, where the lower vertex of resilience~$\omega$ has a uniform witness while the upper row of vertices does not, for reasons that are analogous to the ones presented in Example~\ref{example_safetyallpossibleresiliencevalues}:
Essentially, the upper row of vertices implements the fresh vertex~$v$ to obtain $\game'$. 
\begin{figure*}
\centering
\begin{tikzpicture}[ultra thick,label distance=-1.2mm]

\foreach \i in {0,1,...,9}{
\node[p0s,label=below right:$\i$] (l\i) at (\i*1.5,-1.5) {};
\node[p0s,double,label =below right:$\omega+1$] (a\i) at (\i*1.5,-2.6) {};
\node[p0s,label=below right:$\omega$] (o\i) at (\i*1.5,-.4) {};
}

\node (adots) at (15,-2.6) {$\cdots$};
\node 	  (ldots) at (15,-1.5) {$\cdots$};
\node 	  (odots) at (15,-.4) {$\cdots$};

\node[p0s,label=below right:$\omega$] (b) at (0,-3.9) {$$};

\foreach \i in {0,...,9}{
\path[-stealth]
(o\i) edge[] (l\i);
}
\foreach \i in {1,...,9}{
\path[-stealth]
(l\i) edge[bend left=0] (a\i);
}

\foreach \i [remember=\i as \lasti (initially 0)]in {1,...,9}{
\path[-stealth]
(o\lasti) edge (o\i);
}

\foreach \i [remember=\i as \lasti (initially 9)]in {8,...,0}{
\path[-stealth]
(a\lasti) edge (a\i)
(l\lasti) edge[fault] (l\i)
;
}

\path[-stealth]
(adots) edge (a9)
(o9) edge (odots)
(ldots) edge[fault] (l9)
(b) edge (a0)
(a0) edge[loop left] ()
(b) edge[fault,loop left] ()
(l0) edge[loop left] ()
;	
\end{tikzpicture}
\caption{A one-counter reachability game with all possible resilience values (depicted as labels below vertices). All vertices belong to $V_0$ and disturbance edges are dashed.}
\label{fig_reachresilvalues}
\end{figure*}
\end{proof}

Thus, let us consider the problem of determining the resilience of the initial vertex of a pushdown reachability game.
Lemma~\ref{lemma_riggedpushdowngameseffective} and Lemma~\ref{lemma_riggedpushdowngameseffective_k} are formulated for parity games, and therefore hold in particular for reachability games, as these are subsumed by parity games.
Thus, we can determine whether the initial vertex of a pushdown reachability game has resilience~$\omega+1$, $\omega$ with a uniform witness, or $k$, for a fixed $k$.
However, as the value could also be $\omega$ without a uniform witness (for which we have no characterization) and as we have no upper bound on possible finite values, we do not obtain a complete algorithm determining the resilience of the initial vertex of every game.

Such an upper bound would immediately yield a complete algorithm similar to Algorithm~\ref{algorithm_ocssafety}:
one just has to add a line checking whether the resilience is $\omega$ (and returning that result) and if the resilience is not equal to some $k$ below the upper bound, then the algorithm returns \myquot{$\omega$ without uniform witness}.

Finally, the examples witnessing the lower bounds presented in Lemma~\ref{lemma_lbresilienceinitialvertexsafety} can be turned into reachability games.
Intuitively, one uses disturbances to push content on the stack, i.e., replaces the moves of Player~$1$ by disturbance edges.

\subsection{Optimal Strategies in One-counter Reachability Games}

In this subsection, we build a bridge 
between resilience in pushdown safety games and a classical problem in the theory of infinite games on infinite arenas:
computing optimal strategies in pushdown reachability games, i.e., winning strategies that reach a fixed set~$F$ of target states in the least number of steps possible. 
This problem has been first studied by Cachat~\cite{Cachat02} and recently been revisited by Carayol and Hague~\cite{CarayolH18}.
Here, we first present a connection between both problems and then prove a, to the best of our knowledge, novel result about optimal strategies in one-counter games.

Fix a reachability game~$\game= (\arena, \reach(F))$ with a distur\-ban\-ce-free finitely branching arena~$\arena = (V,V_0, V_1, E, \emptyset)$.
Given a play~$\rho = (v_0,0)(v_1,0)(v_2,0) \cdots$, let 
\[\val_\game(\rho) = \min \set{j \in \omega \mid v_j \in F}\]
 where we define $\min \emptyset = \omega+1$ for technical convenience.
Hence, $\val_\game(\rho)$ is the minimal position in $F$, if $v_0v_1v_2 \cdots \in \reach(F)$, and $\omega+1$ if $v_0 v_1 v_2 \cdots \notin \reach(F)$.
Given a strategy~$\sigma$ for Player~$0$, we define $\val_\game(v,\sigma) = \sup_\rho \val_\game(\rho)$, where $\rho$ ranges over all plays starting in $v$ that are consistent with $\sigma$.
Using König's Lemma shows that $\sigma$ is a winning strategy for Player~$0$ in $(\arena, \reach(F))$ from $v$ if and only if $\val_\game(v,\sigma) < \omega$ (here we use the assumption that $\arena$ is finitely branching).
Otherwise, i.e., if $\sigma$ is not winning from $v$, then we have $\val_\game(v, \sigma) = \omega+1$.
A strategy $\sigma$ for Player~$0$ is reachability optimal if $\val_\game(v,\sigma) \le \val_\game(v,\sigma')$ for every $v\in V$ and every strategy~$\sigma'$ for Player~$0$.
The existence of reachability optimal strategies follows straightforwardly from the correctness proof of the attractor construction for reachability games~\cite{GraedelThomasWilke02}. 

\begin{proposition}
Every reachability game in a finitely branching arena has a reachability optimal strategy.
\end{proposition}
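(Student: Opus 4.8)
The plan is to prove the existence of a reachability optimal strategy by a transfinite attractor-style construction that simultaneously computes, for every vertex~$v$, the value $\opt(v) = \inf_{\sigma} \val_\game(v,\sigma)$ and produces a single positional strategy attaining it everywhere. First I would define a rank function on $V$ by the usual least-fixed-point iteration adapted to count steps: set $A_0 = F$ (vertices of rank~$0$), and for $j+1$ put a vertex~$v$ into $A_{j+1}$ if either $v \in V_0$ and some successor lies in $A_{j}$, or $v \in V_1$ and \emph{all} successors lie in $\bigcup_{i \le j} A_i$. Let $\opt(v) = j$ if $v \in A_j \setminus \bigcup_{i<j} A_i$, and $\opt(v) = \omega+1$ if $v \notin \bigcup_{j \in \omega} A_j$. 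Because the arena is finitely branching, the Player~$1$ clause \myquot{all successors have already been reached} is triggered after finitely many stages, so this is the correct ``number of steps'' measure; this is exactly where König's Lemma / finite branching enters, matching the remark preceding the proposition.

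Next I would define the candidate strategy~$\sigma^\ast$: at a vertex $v \in V_0$ with $\opt(v) = j+1 < \omega$, let $\sigma^\ast(v)$ be some fixed successor~$v'$ with $\opt(v') = j$ (one exists by construction); at vertices with $\opt(v) \in \set{0}\cup\set{\omega+1}$ (already in $F$, or unreachable) let $\sigma^\ast$ pick an arbitrary successor. This $\sigma^\ast$ is positional. I would then verify the key inequality $\val_\game(v,\sigma^\ast) \le \opt(v)$ for every~$v$: by induction on $\opt(v)$, any play from~$v$ consistent with $\sigma^\ast$ either is already at a target ($\opt(v)=0$), or after one step reaches a vertex of strictly smaller $\opt$-value — for $v \in V_0$ by the choice of $\sigma^\ast$, for $v \in V_1$ because every successor has $\opt$-value $< \opt(v)$ by the defining clause. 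Hence $F$ is hit within $\opt(v)$ steps, giving $\val_\game(v,\sigma^\ast) \le \opt(v)$, and in particular $\val_\game(v,\sigma^\ast) = \omega+1$ only if $\opt(v) = \omega+1$.

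Finally I would show the matching lower bound $\val_\game(v,\sigma') \ge \opt(v)$ for \emph{every} strategy~$\sigma'$, which is a routine induction using that Player~$1$ can always move to a successor of $\opt$-value $\ge \opt(v) - 1$ when $v \in V_1$, and that from $v \in V_0$ \emph{every} successor has $\opt$-value $\ge \opt(v)-1$; iterating, Player~$1$ can keep the play out of $F$ for at least $\opt(v)$ steps (and forever if $\opt(v)=\omega+1$). Combining the two bounds yields $\val_\game(v,\sigma^\ast) = \opt(v) \le \val_\game(v,\sigma')$ for all $v$ and all $\sigma'$, which is precisely the definition of reachability optimality. The only delicate point — and the one I would state carefully rather than wave at — is the interplay of the limit value $\omega+1$ with finite branching: one must check that $v \notin \bigcup_{j\in\omega} A_j$ genuinely forces $\val_\game(v,\sigma') = \omega+1$ for all $\sigma'$, i.e.\ Player~$1$ has a strategy avoiding~$F$ forever; this is the standard complement-of-attractor argument and is where the construction connects back to the classical reachability-game attractor correctness cited in the text.
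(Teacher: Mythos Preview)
Your proposal is correct and is precisely the argument the paper alludes to: the paper does not spell out a proof at all but simply states that optimality follows from the correctness proof of the attractor construction for reachability games, which is exactly the rank-based construction you carry out. One small cosmetic point: your attractor clauses are asymmetric (for $V_0$ you require a successor in $A_j$, for $V_1$ a successor in $\bigcup_{i\le j}A_i$), and it would be cleaner to make the levels cumulative throughout; this does not affect the resulting $\opt$-values or the argument.
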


Now, define $\arena' = ( V \cup E, V_1 \cup E, V_0, E', D )$ with 
\begin{itemize}
	\item $E' = \set{ (v,(v,v')),((v,v'),(v,v')) \mid (v,v') \in E }$ and
	\item $D = \set{((v,v'),v') \mid (v,v') \in E}$,
\end{itemize}
 i.e., we split every edge~$(v,v') \in E$ into a sequence of a standard edge from $v$ to the newly introduced vertex~$(v,v')$, a standard self-loop at $(v,v')$, and a disturbance edge from $(v,v')$ to $v'$.
Intuitively, it takes one disturbance in $\arena'$ to simulate a move in $\arena$ while the overall structure of the arena, including the strategic choices for the players, is preserved. 
Note that we flip the positions of Player~$0$ and $1$ when turning $\arena$ into $\arena'$.
Hence, we also dualize the winning condition and turn $\reach(F)$ into the dual safety condition~$\safety(F) = V^\omega \setminus \reach(F)$ and define $\game' = (\arena', \safety(F))$.
If $\game$ is a pushdown (one-counter) reachability game, then $\game'$ is a pushdown (one-counter) safety game and the blow-up of the transformation is polynomial.

Now, we relate the resilience of vertices in $\game'$ with the value of optimal strategies in~$\game$.

\begin{lemma}
\label{lemma_reachoptimality_vs_resilience}
Let $\game$ and $\game'$ as above. 
A reachability optimal strategy~$\sigma$ satisfies $\val_\game(v,\sigma) = r_{\game'}(v)$ for every $v \in V$.
\end{lemma}

\begin{proof}
Fix a reachability optimal strategy~$\sigma_\opt$ for Player~$0$ in $\game$.
We sketch the proof ideas, but leave the straightforward details to the reader.

{\boldmath$\val_\game(v,\sigma_\opt) \le r_{\game'}(v)$:}
The statement is trivial if $r_{\game'}(v) = \omega +1$. 
Hence, assume $r_{\game'}(v) \in \omega$, say $r_{\game'}(v) = k$.
Then, Player~$1$ has a winning strategy~$\tau$ for $(\arena'_\rig, \safety(F)_\rig \cup R_{\ge k+1})$ from $v$ due Lemma~\ref{lemma_riggedgamescharacterizeresilience}.\ref{lemma_riggedgamescharacterizeresilience_k}.
This strategy can be turned into a strategy~$\sigma$ for Player~$0$ in $\arena$ that mimics $\tau$ while ignoring the auxiliary vertices in $\arena'_\rig$ that are not in $\arena$. 
An induction shows that $F$ is reached within $k$ moves when starting in $v$ and playing according to $\sigma$.
Thus, $\val(v,\sigma_\opt) \le \val(v,\sigma) \le k = r_{\game'}(v)$.

{\boldmath$r_{\game'}(v) \le \val_\game(v,\sigma_\opt)$:}
The statement is trivial, if $\val_\game(v,\sigma_\opt) = \omega +1$. 
Hence, assume $\val_\game(v,\sigma_\opt)  \in \omega$, say $\val_\game(v,\sigma_\opt) = k$.
Then, by definition, $F$ is reached within $k$ moves when starting in $v$ and playing according to  $\sigma_\opt$.
The strategy~$\sigma_\opt$ for Player~$0$ in $\game$ can be turned into a strategy~$\tau$ for Player~$1$ in $\arena'_\rig$ that simulates the moves of $\sigma_\opt$ and, as long as $F$ has not been visited, simulate a disturbance whenever possible (there is a unique disturbance edge at every vertex with outgoing disturbances edges).
After visiting $F$, no more disturbances are simulated.
An induction shows that $F$ is reached and at most $k$ disturbances are simulated when starting in $v$ and playing according to $\tau$.
Hence, Player~$1$ wins 
\[(\arena'_\rig, \safety(F)_\rig \cup R_{\ge k+1}),\] which implies $r_\game(v) \le k = \val_\game(v,\sigma_\opt)$ by Lemma~\ref{lemma_riggedgamescharacterizeresilience}.\ref{lemma_riggedgamescharacterizeresilience_k}.
\end{proof}

Using the connection between resilience and values of reachability optimal strategies allows us to compute the value of a reachability optimal strategy in the initial vertex of a pushdown game. 
In particular, for one-counter systems, we obtain an algorithm with polynomial space requirements. 
Thereby, we close a gap in our knowledge about reachability optimal strategies in pushdown games.

\begin{theorem}
The following problem can be solved in polynomial space: \myquot{Given a one-counter reachability game~$\game$ with initial vertex~$v_\initmark$, determine $\val_\game(v_\initmark, \sigma)$ for a reachability optimal strategy~$\sigma$}.
\end{theorem}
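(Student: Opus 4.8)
The plan is to reduce the problem to the algorithm for computing the resilience of the initial vertex in a one-counter safety game, which we already know runs in polynomial space (Theorem~\ref{theorem_resiliencesafetyinpspace}), and to show that the reduction itself is computable in polynomial space (indeed polynomial time). Concretely, given a one-counter reachability game~$\game = (\arena, \reach(F))$ with initial vertex~$v_\initmark$, I would first construct the one-counter safety game~$\game' = (\arena', \safety(F))$ as defined just before the statement: split every edge~$(v,v')$ of $\arena$ into a standard edge~$(v,(v,v'))$, a self-loop at the fresh vertex~$(v,v')$, and a disturbance edge~$((v,v'),v')$, flip the ownership of the two players, and dualize the winning condition. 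Since this transformation only adds one fresh vertex per edge and the edge relation of a one-counter system is given symbolically by finitely many transition rules, $\game'$ is again a one-counter (safety) game whose underlying OCS has size polynomial in that of~$\game$; in particular the new initial vertex is $v_\initmark$ itself, which still lies in $V$.

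Next I would invoke Lemma~\ref{lemma_reachoptimality_vs_resilience}: a reachability optimal strategy~$\sigma$ for~$\game$ satisfies $\val_\game(v,\sigma) = r_{\game'}(v)$ for every~$v \in V$, and in particular $\val_\game(v_\initmark,\sigma) = r_{\game'}(v_\initmark)$. Note that $\val_\game(v_\initmark,\sigma)$ does not depend on which reachability optimal strategy is chosen (all of them attain the pointwise minimal value by definition, and the lemma shows this common value equals $r_{\game'}(v_\initmark)$), so the quantity the problem asks for is well-defined and equals $r_{\game'}(v_\initmark)$. Hence computing $\val_\game(v_\initmark,\sigma)$ amounts to computing the resilience of the initial vertex of the one-counter safety game~$\game'$.

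Finally I would apply Theorem~\ref{theorem_resiliencesafetyinpspace} to~$\game'$: its initial vertex's resilience can be determined in space polynomial in the size of~$\game'$, hence in space polynomial in the size of~$\game$. Composing a polynomial-time (hence polynomial-space) reduction with a polynomial-space procedure stays in \pspace, which completes the argument. The only genuinely delicate point is the correctness of the reduction, but that is exactly the content of Lemma~\ref{lemma_reachoptimality_vs_resilience}, which we may assume; so the main obstacle here is merely bookkeeping — checking that the transformation $\game \mapsto \game'$ preserves the one-counter format, keeps the size polynomial, and maps the initial vertex to a vertex of~$\arena'$ lying in~$V$ (so that $r_{\game'}$ is meaningful and the lemma applies). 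One subtlety worth stating explicitly is that $\val_\game(v_\initmark,\sigma)$ may be $\omega+1$ (when Player~$0$ has no winning strategy from~$v_\initmark$), which corresponds precisely to $r_{\game'}(v_\initmark) = \omega+1$; the algorithm of Theorem~\ref{theorem_resiliencesafetyinpspace} already handles and reports this case, so no separate treatment is needed.
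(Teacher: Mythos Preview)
Your proposal is correct and follows exactly the approach the paper intends: reduce the one-counter reachability game~$\game$ to the one-counter safety game~$\game'$ via the polynomial edge-splitting construction, invoke Lemma~\ref{lemma_reachoptimality_vs_resilience} to obtain $\val_\game(v_\initmark,\sigma)=r_{\game'}(v_\initmark)$, and then apply Theorem~\ref{theorem_resiliencesafetyinpspace}. The paper does not spell out a formal proof beyond the paragraph preceding the theorem, but your write-up matches that reasoning precisely (including the handling of the $\omega+1$ case and the observation that $\game'$ remains one-counter of polynomial size).
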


Note that the approach via a reduction to computing the resilience presented here is not the simplest one:
One could simplify the constructions presented in Section~\ref{sec_ocs} and obtain a direct algorithm.

\section{Conclusion}
\label{sec_conclusion}
In this work, we have investigated pushdown safety games with disturbances, thereby extending the theory of games with disturbances from finite to infinite arenas.
In particular, we have determined the possible resilience values in safety games, presented effective characterizations for all possible values, and presented  algorithms that determine the resilience of the initial vertex (and a witnessing strategy) in one-counter and pushdown safety games.
As an application of our results, we obtained a polynomial space algorithm for computing optimal winning strategies for one-counter reachability games.
This is, to the best of our knowledge, the first improvement over the general doubly-exponential time algorithm for pushdown reachability games due to Carayol and Hague~\cite{CarayolH18}.

The algorithm computing the resilience in one-counter safety games runs in polynomial space, which is optimal, as the corresponding decision problems are $\pspace$-complete.
However, the algorithm for pushdown games has triply-ex\-ponential running time.
Here, there is a gap, as some of the corresponding decision problems are $\exptime$-complete (e.g., those for resilience $\omega+1$ and $\omega$) while the complexity of others is open (e.g., that for finite resilience values).
In future work, we aim to close this gap.
An interesting first step in that direction would be to determine the complexity of checking whether the resilience of the initial vertex is at least $k$, where $k$ is part of the input and encoded in binary.
Here, one has to keep in mind that algorithms for computing the resilience also yield algorithms computing optimal strategies in reachability games.
The latter problem also has a complexity gap between the currently best algorithms and known lower bounds.
Finally, another obvious open problem is to consider more general winning conditions, e.g., reachability or parity.

The main obstacle is that one either has to develop an effective characterization of vertices with resilience $\omega$ without a uniform witness, or to obtain an upper bound on the finite resilience value an initial vertex can assume. 
The first option is challenging due to the quantifier change discussed in Section~\ref{sec_pushdown}.
Hence, the more promising route seems to be the second option. 
The main challenge here is to bound the number of disturbances that are necessary to prevent Player~$0$ from ever reaching the target states, i.e., Player~$1$ now has a safety objective in conjunction with a limited number of disturbances at his disposal. 
\bibliographystyle{plain}
\bibliography{autocleaned}

\end{document}